\newtheorem{proposition}[theorem]{Proposition}
\author{Stéphane Le Roux}{LSV, CNRS \& ENS Cachan, Université Paris-Saclay, Cachan, France}{}{}{}
\author{Arno Pauly}{Swansea University, United Kingdom}{}{}{}
\author{Mickael Randour}{F.R.S.-FNRS \& UMONS -- Université de Mons, Belgium}{}{}{F.R.S.-FNRS Research Associate.}
\authorrunning{S.~Le Roux and A.~Pauly and M.~Randour}
\subjclass{Formal languages and automata theory}
\keywords{games on graphs, finite-memory determinacy, multiple objectives}
\title{Extending finite-memory determinacy by Boolean combination of winning conditions}
\titlerunning{Extending finite-memory determinacy by Boolean combination of winning conditions}
\begin{document}
\maketitle

\begin{abstract}
We study finite-memory (FM) determinacy in games on finite graphs, a central question for applications in controller synthesis, as FM strategies correspond to implementable controllers. We establish general conditions under which FM strategies suffice to play optimally, even in a broad multi-objective setting. We show that our framework encompasses important classes of games from the literature, and permits to go further, using a unified approach. While such an approach cannot match ad-hoc proofs with regard to tightness of memory bounds, it has two advantages: first, it gives a widely-applicable criterion for FM determinacy; second, it helps to understand the cornerstones of FM determinacy, which are often hidden but common in proofs for specific (combinations of) winning conditions.
\end{abstract}

\section{Introduction}
\label{sec:intro}

\smallskip\noindent\textbf{Controller synthesis through the game-theoretic metaphor.} \textit{Two-player games on graphs} are widely studied, notably for their applications in controller synthesis for reactive systems~\cite{DBLP:conf/dagstuhl/2001automata,rECCS,DBLP:conf/lata/BrenguierCHPRRS16,DBLP:reference/mc/BloemCJ18}. In this context, Player~1 is seen as the system to control, Player~2 as its uncontrollable environment, and the game models their interaction. The objective of Player~1 is to enforce a given specification represented as a \textit{winning condition}. The goal of synthesis is thus to decide if Player~1 has a \textit{winning strategy}, i.e., one that guarantees victory against all possible strategies of Player~2, and to build such a strategy efficiently if it exists.

Winning strategies are essentially \textit{formal blueprints} for controllers that one may want to implement in practical applications. With that in mind, the complexity of such strategies is of tremendous importance: the simpler the strategy, the easier and cheaper it will be to build the corresponding controller and maintain it. That is why a lot of research effort is put into pinpointing the complexity (in terms of memory and/or randomness) of strategies needed to play optimally for each specific class of games and winning conditions.

\smallskip\noindent\textbf{Memoryless determinacy.} An elegant result by Gimbert and Zielonka established more than ten years ago characterizes the winning conditions that enjoy \textit{memoryless determinacy}\footnote{The existence, in each vertex of the game, of a winning strategy that requires no memory at all, for one of the two players.} in two-player turn-based zero-sum games on finite graphs~\cite{DBLP:conf/concur/GimbertZ05}. These include conditions such as \textit{parity}, \textit{mean-payoff}, or \textit{energy}, all in the single-objective case.

\smallskip\noindent\textbf{The need for memory.} Over the last decade, the increasing need to model complex spec\-ifi\-ca\-tions has shifted research toward games where multiple quantitative and qualitative objectives co-exist and interact, requiring the analysis of \textit{interplay} and \textit{trade-offs} between several objectives. Consider a computer server responding to requests. Decreasing its response-time (modeled using a mean-payoff condition) would usually require to increase its computational power and energy consumption (modeled by an energy condition). Hence, we need to consider games where winning conditions are actually conjunctions of conditions, or even richer Boolean combinations. In this context, memoryless strategies do not suffice, and one has to use an amount of memory which can quickly become an obstacle to implementation (e.g., exponential memory) or which can prevent it completely (infinite memory).

See for example~\cite{DBLP:journals/tcs/ChatterjeeD12,DBLP:journals/acta/ChatterjeeRR14,DBLP:conf/icalp/JurdzinskiLS15} for combinations of energy and parity,~\cite{DBLP:journals/iandc/VelnerC0HRR15} for combinations of mean-payoff,~\cite{DBLP:journals/acta/BouyerMRLL18,DBLP:conf/fossacs/BouyerHMR017} for combinations of energy and \textit{average-energy} conditions,~\cite{DBLP:journals/iandc/Chatterjee0RR15} for combinations of \textit{total-payoff}, or~\cite{DBLP:journals/iandc/Chatterjee0RR15,DBLP:journals/corr/BruyereHR16} for combinations of \textit{window} objectives. Establishing precise complexity bounds for such general combinations of winning conditions is tricky and sometimes counterintuitive. For example, while energy games and mean-payoff games are inter-reducible in the single-objective setting, exponential-memory strategies are both sufficient and necessary for conjunctions of energy conditions while \textit{infinite-memory} strategies are required for conjunctions of mean-payoff ones.

\smallskip\noindent\textbf{Our contribution.} Our goal is to provide a \textit{general and abstract theorem} that lets us draw conclusions over games with complex objectives, provided that the primitive objectives used in the construction of the complex objectives fulfill some criteria.
Such an abstract approach provides results for many concrete types of objectives at once, and can be applied to new types of objectives in which the community may become interested later. Another advantage of an abstract approach is that it reveals partially the \textit{core features} determining whether the results hold or not. Admittedly, a downside of the abstract approach is that the concrete bounds that we obtain for the required memory will often be far worse than those established in concrete instances (as we depend on the most general upper bound).

Let us sketch our approach intuitively. First, we define the notion of \textit{regularly-predictable} winning condition: a winning condition is regularly-predictable if for every game using it, a finite automaton suffices to recognize histories from which Player~1 has a winning strategy. As we will show, many well-behaved winning conditions fall in this category (including all prefix-independent ones). Second, we consider \textit{regular languages} which subsume many simple winning conditions (e.g., fully bounded energy, window objectives). Third, we introduce the notion of \textit{hypothetical subgame-perfect equilibrium} (hSPE) that is a key technical tool for the theorem to come.\footnote{Morally, we consider regularly-predictable winning conditions for which FM strategies suffice, but for technical reasons, we need the slightly stronger assumption of existence of FM hSPE.} Finally, we prove our main result: given a class $\mathcal{W}$ of regularly-predictable winning conditions admitting FM hSPE and closed under Boolean combination, any winning condition obtained by Boolean combination of elements of $\mathcal{W}$ and regular languages is itself a regularly-predictable condition admitting FM hSPE.

\smallskip\noindent\textbf{Discussion.} The sketch depicted above may seem unsurprising to the expert reader. Indeed, regular languages are morally equivalent to \textit{safety} conditions, and combining reasonably well-behaved objectives with safety objectives should preserve FM determinacy (with a blow-up due the possible compact encoding of the safety-like conditions). This is perfectly true but in our opinion the interest of our contribution is elsewhere. First, while it is \textit{morally} easy to believe that such an extension of FM determinacy will hold, it is actually quite tedious to manage to prove it for general classes of games (e.g., see all aforementioned articles on specific combinations of objectives). Here, we provide a \textit{general proof} of this scheme of extension, based on (in our humble opinion) \textit{elegant and natural concepts} such as regularly-predictable objectives and hSPE. Second, we provide a thorough discussion of the hypotheses needed for our main theorem, and we show that they are relatively \textit{tight} in the sense that weakening any condition almost immediately leads to falsification of FM determinacy for the resulting complex objective. Thus, we give a \textit{rather complete picture of the frontiers of FM determinacy for combination of objectives} that, hopefully, will help in understanding its cornerstones and drive further research toward such amenable objectives.

\smallskip\noindent\textbf{Related work.} We already mentioned several papers studying specific (combinations of) winning conditions. Here, we highlight works where similar approaches have been considered to establish ``meta-theorems'' applying to general classes of games. Arguably the most important result in this direction is the determinacy theorem by Martin that guarantees determinacy (without considering the complexity of strategies) for Borel winning conditions~\cite{martin_AM75}. We already discussed Gimbert and Zielonka's focus on memoryless strategies~\cite{DBLP:conf/concur/GimbertZ05}. Kopczynski studied games where memoryless strategies suffice for only one of the players whereas his opponent requires memory~\cite{DBLP:conf/icalp/Kopczynski06}. Finally, a similar approach has been pursued in \cite{DBLP:journals/iandc/RouxP18}, where abstract criteria were identified that enable moving from FM determinacy of two-player win/lose games to the existence of FM Nash equilibria in multi-player multi-outcome games.

This paper is an extended version of the corresponding conference article~\cite{fsttcs}.

\smallskip\noindent\textbf{Outline.} In Sect.~\ref{sec:prelim}, we define the core concepts, discuss known results from the literature, and present an illustrative example that requires Boolean combinations, not only conjunctions. In Sect.~\ref{sec:thm}, we present the main theorem sketched above. In Sect.~\ref{sec:app}, we discuss classical (combinations of) objectives from the literature and how they fit (or not) in our framework. Then, in Sect.~\ref{sec:req}, we come back to the theorem to discuss its requirements and its relative tightness. Finally, in Sect.~\ref{sec:conj}, we present several more precise results that can be obtained from our approach when restricting combinations of objectives to conjunctions and disjunctions only.

\section{Preliminaries and example}
\label{sec:prelim}

\subsection{Definitions}

\begin{definition}
We fix a set $C$, calling its elements \emph{colors}. A $C$-colored \emph{arena} $O$ is a tuple $O = \langle V_1, V_2, E, \Gamma\rangle$, where $(V_1 \cup V_2,E)$ is a directed graph such that each vertex has at least one outgoing edge, and $\Gamma\colon V_1 \cup V_2 \to C$ is the coloring function. We assume \textit{finite} arenas.
\end{definition}

As we generally consider the set $C$ of colors fixed, we suppress $C$-colored in the following. A \emph{winning condition} is a set $W \subseteq C^\omega$ of infinite sequences of colors. An arena $O$ together with a winning condition $W$ constitutes a \emph{game} $g = (O, W)$.

A \emph{strategy} for Player $i$ is a function $\sigma_i\colon C^\ast \times V_i \to (V_1 \cup V_2)$ satisfying that for all $w \in C^*$ and $v \in V_i$ we have $(v,\sigma_i(w,v)) \in E$. This means that a strategy selects a successor vertex based on the current vertex \emph{owned} by the relevant player, and a finite sequence of colors. 
It is \emph{finite-memory} if it can be encoded by a deterministic \emph{Moore machine} $(M, m_0, \alpha_u, \alpha_n)$ where $M$ is a finite set of states (the memory of the strategy), $m_0 \in M$ is the initial memory state, $\alpha_u\colon M \times C \to M$ is the update function, and $\alpha_n\colon M \times V_i \to (V_1 \cup V_2)$ is the next-action function. The Moore machine defines a strategy $\sigma_i$ such that $\sigma_i(w v) = \alpha_n(\widehat{\alpha}_u(m_0,w),v)$ for all history $w \in C^\ast$ and vertex $v \in V_i$, where $\widehat{\alpha}_u$ extends $\alpha_u$ to sequences of colors as expected. The \textit{size} of the strategy is the size $|M|$ of its Moore machine. Note that a strategy is \textit{memoryless} when $|M| = 1$.

\begin{remark}
Let us comment on our use of colors. If we consider a fixed game only, we could w.l.o.g.~consider each vertex to be colored by itself. However, separating vertices and colors is necessary to speak about using the same winning condition together with different arenas.
Then one might expect that a strategy can take into account the history of vertices leading up to the current position, not merely the sequence of colors. However, it is straightforward to verify that players have no incentive to take these specifics into account. Moreover, the notion of hSPE defined below takes into account also histories not actually realizable in the given arena. Defining strategies on color histories is thus the most convenient approach.
\end{remark}

A \emph{strategy profile} is a pair of strategies $(\sigma_1,\sigma_2)$, which we identify with the function $\sigma = \sigma_1 \cup \sigma_2\colon C^* \times (V_1 \cup V_2) \to (V_1 \cup V_2)$. We say that Player $i$ can deviate from $(\sigma_1,\sigma_2)$ to $(\sigma'_1,\sigma'_2)$ if $\sigma_{3-i} = \sigma'_{3-i}$. A strategy profile $\sigma$ induces a function $\Sigma_\sigma\colon C^* \times (V_1 \cup V_2) \to C^\omega$ coinductively as follows: $\Sigma_\sigma(w,v)(i) = w(i)$ (where $w(i)$ represents the $i$th symbol of $w$) for $i \leq |w|$ and $\Sigma_\sigma(w,v)(k) = \Sigma_\sigma(w\Gamma(v),\sigma(w,v))(k)$ for $k = |w|+1$. The infinite color sequence $\Sigma_\sigma(w,v)$ is the \emph{play} starting at $v$ with history $w$.
For a play $\rho$, let $Pref(\rho)$ be the set of finite prefixes of $\rho$.

\begin{definition}
A \textit{hypothetical subgame-perfect equilibrium} (hSPE) is a strategy profile $\sigma$ such that for all  $w \in C^*$, $v \in V_1 \cup V_2$ the following implications hold: If $\Sigma_\sigma(w,v) \in W$, then $\Sigma_{\sigma'}(w,v) \in W$ for each $\sigma'$ that Player $2$ can deviate to from $\sigma$. If $\Sigma_\sigma(w,v) \notin W$, then $\Sigma_{\sigma''}(w,v) \notin W$ for each $\sigma''$ that Player $1$ can deviate to from $\sigma$.
\end{definition}

Informally, in an hSPE a player is winning from exactly these combinations of history and current vertex that she can win from at all. What differentiates the notion of hSPE from the usual subgame-perfect equilibria is that we take into account not just histories realizable in the arena, but also the \emph{hypothetical} histories which could not actually have happened. Clearly, every hSPE can be domain-restricted to an SPE. Moreover, if for a winning condition~$W$ there are SPE for all arenas, there are also hSPE for all arenas. The subtlety of this notion only becomes relevant when working with restricted classes of arenas.

We focus on the existence of \textit{finite-memory} (FM) hSPE for classes of games. An hSPE is finite-memory when the strategies $\sigma_i$, $i \in \{1, 2\}$ are finite-memory, i.e., can be implemented as Moore machines as seen above. The size of an FM hSPE is taken as the sum of the sizes of its strategies' Moore machines.

\subsection{State of the art}
\label{sec:sota}
We will briefly review a selection of results about combinations of classical objectives in two-player turn-based zero-sum games. As mentioned in Sect.~\ref{sec:intro}, results in this area are too numerous to provide an exhaustive list here. We thus focus on prominent winning conditions and memory requirements only, and do not delve too deep into technical details.

\smallskip\noindent\textbf{Simple objectives.} We first present the simple winning conditions.
\begin{itemize}
\item \textit{Parity.} Vertices are labeled with integer priorities, and a play is winning for Player~1 iff, among the priorities seen infinitely often, the maximal one is even. This condition subsumes many simple ones such as \textit{reachability} or \textit{safety} with regard to a given set.
\item \textit{Muller.} Let $U_1, \ldots, U_p$ be subsets of vertices, a play is winning iff the set of vertices seen infinitely often is equal to some $U_i$.
\item \textit{Energy.} Vertices are labeled with integer weights representing energy consumption and/or gain. The running sum of weights along a play must stay non-negative at all times. An upper bound may also be fixed. We consider two variants of upper bounds: \textit{battery-like} and \textit{spill-over-like} energy conditions. In the former, any energy in excess is simply lost, while in the latter, the play is lost if the upper bound is exceeded.
\item \textit{Mean-payoff.} Vertices are labeled with integer weights, and we look at the limit of the averages over all prefixes of a play. Since this limit need not exist in general, two variants are studied, for $\limsup$ and $\liminf$. We require the limit to be above a given threshold.
\item \textit{Total-payoff.} Vertices are labeled with integer weights, and we look at the limit of partial sums over prefixes. Again, two variants exist for $\limsup$ and for $\liminf$. We require the limit to be above a given threshold.
\item \textit{Average-energy.} Vertices are labeled with integer weights, and we look at the limit of the averages of partial sums (i.e., the average energy level) over prefixes. Again, two variants exist for $\limsup$ and for $\liminf$. We require the limit to be above a given threshold.
\item \textit{Window objectives.} Variations of mean-payoff and parity where instead of looking at the limit over an infinite play, we fix a finite window sliding over the play and we require the appropriate behavior to happen within the window at each step along the play.
\end{itemize}
Observe that all these winning conditions, along with arbitrary combinations, can be expressed in our formalism, i.e., as subsets of $C^\omega$ for an appropriate set of colors $C$.

\smallskip\noindent\textbf{Single-objective case.} The situation is as follows: parity~\cite{DBLP:journals/tcs/Zielonka98}, energy with only a lower bound~\cite{DBLP:conf/emsoft/ChakrabartiAHS03}, mean-payoff~\cite{EM79}, total-payoff~\cite{DBLP:conf/mfcs/GimbertZ04} and average-energy~\cite{DBLP:journals/acta/BouyerMRLL18} games are memoryless determined; Muller games require exponential-memory strategies for both players~\cite{DBLP:conf/lics/DziembowskiJW97}; lower and upper bounded energy games (both variants) require pseudo-polynomial memory (in the upper bound) for both players~\cite{DBLP:journals/acta/BouyerMRLL18}; window objectives require polynomial memory (in the window size) for both players~\cite{DBLP:journals/iandc/Chatterjee0RR15,DBLP:journals/corr/BruyereHR16}. Hence, \textit{all these winning conditions are FM determined}.

\smallskip\noindent\textbf{Combinations of objectives.} Despite this common trait, \textit{these objectives behave in very different ways when used in combinations}, even for very restricted ones. First, let us note that most results in the literature deal with the simplest case of conjunctions (or disjunctions) of objectives. In this setting, parity~\cite{DBLP:conf/fossacs/ChatterjeeHP07}, Muller, energy~\cite{DBLP:journals/acta/ChatterjeeRR14,DBLP:conf/icalp/JurdzinskiLS15}, and window objective~\cite{DBLP:journals/iandc/Chatterjee0RR15,DBLP:journals/corr/BruyereHR16} games remain FM determined (often, with an exponential blow-up in the number of conjuncts). The same holds for conjunctions of (possibly multiple) lower-bounded energy and parity conditions~\cite{DBLP:journals/tcs/ChatterjeeD12,DBLP:journals/acta/ChatterjeeRR14} or of a single average-energy condition with an energy one~\cite{DBLP:journals/acta/BouyerMRLL18,DBLP:conf/fossacs/BouyerHMR017}.

On the contrary, conjunctions of mean-payoff conditions~\cite{DBLP:journals/iandc/VelnerC0HRR15} or of a single mean-payoff and a single parity condition~\cite{DBLP:conf/lics/ChatterjeeHJ05} require infinite-memory strategies (for Player~1). Furthermore, it follows from the undecidability of multi-dimension average-energy games~\cite{DBLP:conf/fossacs/BouyerHMR017} and multi-dimension total-payoff games~\cite{DBLP:journals/iandc/Chatterjee0RR15} that unbounded memory is required for both players in these two settings. Finally, let us mention that \textit{Boolean combinations} were studied by Velner for mean-payoff conditions, and they were proved to be undecidable and requiring infinite-memory strategies~\cite{DBLP:conf/fossacs/Velner15}. For Boolean combinations of window objectives and a strict subset of parity conditions, Bruyère et al.~proved that games are exponential-memory determined~\cite{DBLP:conf/concur/BruyereHR16}.

The \textit{take-home message} is that winning conditions that are similar in the single-objective case may lead to contrasting behaviors when considering combinations, even for the simple case of conjunctions. Despite all these related works, little is known about the inherent mechanisms underlying FM determinacy, and why some objectives preserve it in combinations while others do not. Our main theorem, in Sect.~\ref{sec:thm}, formulates an answer to that question.

\subsection{A motivating example}
\label{sec:example}
We define a general class of games combining Muller conditions and instances of the two variants of bounded energy games presented in Sect.~\ref{sec:sota}.

\begin{definition}[Multi-dimension bounded-energy Muller games]\label{def:mdbemg}
Let $U_1,\dots,U_p$ be subsets of vertices of a finite arena, let every vertex of this arena be labeled with a tuple in $\mathbb{Z}^{n+m}$, let $b \in \mathbb{N}^{n+m}$, and let $\varphi$ be a proposition from propositional logic with $p+n+m$ free variables. Player~1 wins the game iff $\varphi(x_1,\dots,x_p,y_1,\dots,y_n,z_1,\dots,z_m)$ holds, where
\begin{itemize}
\item $x_i$ holds if the Muller condition induced by $U_i$ is satisfied,
\item $y_i$ holds if the the $i$-th battery-like energy condition holds, using $b_i$ as a battery-like bound and the $i$-th components of the labels as energy deltas,
\item $z_i$ holds if the the $i$-th spill-over-like energy condition holds, using $b_{i+n}$ as a spill-over-like bound and the $(i+n)$-th components of the labels as energy deltas.
\end{itemize}
\end{definition}

\smallskip\noindent\textit{Example.} A reporter has two options to travel across a country and make a documentary. Either by car and by sending pictures instantly to colleagues from her newspaper, or on foot to get a deeper understanding of the country. In the latter case she would not need to send pictures instantly, but to report on her location everyday for safety reasons. She finally decides to go by car, but if at some point she ran out of gas, she would continue on foot. In these specifications we can identify one energy condition, relating to the gas for the car, and two Muller conditions: sending the pictures and reporting the location. Note that such \textit{conditional Muller} conditions could not be simulated by one Muller condition alone.

Whereas all the ingredients of such a game, i.e., Muller, battery-like and spill-over-like energy conditions, are rather well understood, little is known about arbitrary combinations thereof. To prove that they are all FM determined we shall prove a more general result implying that combining a well-behaved winning condition (e.g., Muller) with conditions expressible by finite automata (e.g., bounded energy) yields a well-behaved condition again.

\section{Main theorem}
\label{sec:thm}

\subsection{The class of arenas}
We will formulate our main theorem for games built with arenas from some class $\mathcal{O}$ that is closed under certain operations. The class of all arenas is trivially closed under these operations, and in most applications it will be the most reasonable choice for $\mathcal{O}$. Our theorem allows for cases, however, where to ensure the premise we exploit some interactions between the specific structure of the arenas considered and the winning condition.
The two operations we consider are restrictions of the arena, and products with finite automata.

\begin{definition}[Restriction of an arena] Let $O = \langle V_1, V_2, E,\Gamma\rangle$ be an arena, and let $S\subseteq V_1 \cup V_2$ be such that $vE \cap S \neq \emptyset$ for all $v\in S$, where $vE := \{u \in V_1 \cup V_2\,\mid\, (v,u) \in E\}$. The arena $O\mid_S \coloneqq  \langle  V_1\cap S, V_2\cap S, E \cap S^2,\Gamma\mid_S\rangle$ is called the restriction of $O$ to $S$.
\end{definition}

\begin{definition}[Product arena-automaton]
Let $O = \langle V_1, V_2, E, \Gamma\rangle$ be an arena and let $\mathcal{A} = (C,Q, q_0,F,\Delta)$ be a finite automaton over the alphabet $C$ of colors. The product $O \times \mathcal{A}$ is the arena $\langle V_1 \times Q, V_2 \times Q, E',\Gamma'\rangle$ where $\Gamma'(v,q) \coloneqq \Gamma(v)$, and where $((v,q),(v',q'))\in E'$ iff $(v,v')\in E$ and $(q,\Gamma(v),q') \in \Delta$.
\end{definition}

\begin{lemma}
Let $O = \langle V_1, V_2, E, \Gamma\rangle$ be an arena and let $\mathcal{A} = (C,Q, q_0,F,\Delta)$ and $\mathcal{A}' = (C,Q', q'_0,F',\Delta')$ be finite automata.
\begin{enumerate}
\item (Idempotency) $(O\mid_S)\mid_{S'} = O\mid_{(S \cap S')}$
\item (Associativity) $(O \times \mathcal{A}) \times  \mathcal{A}' \cong O \times (\mathcal{A} \times  \mathcal{A}' )$.
\item (Restricted commutativity) $O\mid_S \times \mathcal{A} = (O \times \mathcal{A})\mid_{S \times Q}$
\end{enumerate}
\end{lemma}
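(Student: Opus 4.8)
The three identities are all structural and should follow by unwinding the definitions on each side and comparing the resulting tuples componentwise. The plan is to treat each item separately, in each case writing out the four-tuple that each side denotes and checking agreement on the vertex sets, the edge relation, and the coloring; the role of the side conditions (each vertex keeping an outgoing edge) is merely to guarantee that every object named on either side is a well-formed arena, so I will remark on that once rather than belaboring it.

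For idempotency, I would observe that $(O\mid_S)\mid_{S'}$ has vertex sets $(V_i\cap S)\cap S' = V_i\cap(S\cap S')$ and edge set $(E\cap S^2)\cap (S')^2 = E\cap(S\cap S')^2$, using that $S'$ here is a subset of $V_1\cup V_2$ restricted to the vertices of $O\mid_S$, i.e. we intersect with $S'\cap S$; the coloring is the common restriction $\Gamma\mid_{S\cap S'}$. Care is needed that the side condition of the outer restriction — every $v\in S'$ has a successor inside $S'$ within $O\mid_S$ — together with that of the inner one is exactly what is needed for $O\mid_{S\cap S'}$ to be defined, so the two sides are defined under the same hypotheses. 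For associativity, the only genuine point is that it is an isomorphism rather than an equality, because $(V_i\times Q)\times Q'$ and $V_i\times(Q\times Q')$ differ by the obvious reassociation bijection; under that bijection the edge relation on both sides reduces to the conjunction $(v,v')\in E \wedge (q,\Gamma(v),q')\in\Delta \wedge (r,\Gamma(v),r')\in\Delta'$ — note the automaton reads $\Gamma(v)$, i.e. the color of the $O$-vertex, at every level, which is why the product automaton $\mathcal{A}\times\mathcal{A}'$ reading the same color makes the two sides agree — and the colorings both collapse to $\Gamma(v)$. I would state the reassociation map explicitly and verify it is a graph isomorphism preserving colors.

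For restricted commutativity, $O\mid_S\times\mathcal{A}$ has vertices $(V_i\cap S)\times Q$, which equals $(V_i\times Q)\cap(S\times Q)$, matching the vertices of $(O\times\mathcal{A})\mid_{S\times Q}$; an edge $((v,q),(v',q'))$ is present on the left iff $v,v'\in S$, $(v,v')\in E$, and $(q,\Gamma(v),q')\in\Delta$, and on the right iff $(v,v')\in E$, $(q,\Gamma(v),q')\in\Delta$, and $(v,q),(v',q')\in S\times Q$, i.e. $v,v'\in S$ — the same condition; and the colorings are both $(v,q)\mapsto\Gamma(v)$ restricted to $S\times Q$. The one thing to check is that the restriction $(O\times\mathcal{A})\mid_{S\times Q}$ is legitimate, i.e. every vertex of $S\times Q$ retains an outgoing edge in $O\times\mathcal{A}$; this follows because $O\mid_S\times\mathcal{A}$ is assumed to be a well-formed arena (so each $(v,q)$ with $v\in S$ has a successor $(v',q')$ with $v'\in S$), and that very edge witnesses the requirement on the right-hand side.

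I do not expect any real obstacle here: everything is a matter of carefully matching definitions, and the only subtle point is bookkeeping about \emph{which} ambient set the restricting subset lives in when restrictions are iterated or interleaved with products (hence the phrasing of item~3 with $S\times Q$ rather than just $S$). The mild asymmetry in item~2 — isomorphism, not equality — is intrinsic to how we chose to encode product vertices as ordered pairs and is handled by naming the canonical reassociation bijection once.
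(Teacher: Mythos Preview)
Your proposal is correct and is exactly the routine definition-chasing one would carry out; the paper itself gives no proof of this lemma at all (it is stated and immediately followed by its corollary), so there is nothing to compare against beyond noting that your approach is the obvious and expected one. Your attention to the well-formedness side conditions and to the fact that item~2 is only an isomorphism is appropriate and slightly more careful than the paper, which treats the lemma as self-evident.
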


\begin{corollary}
Let the arena $O'$ be obtainable from the arena $O$ by finitely many operations chosen from products with automata and restrictions. Then there is an automaton $\mathcal{A}$ and a suitable set $S \subseteq V \times Q$ such that $O' = (O \times \mathcal{A})\mid_S$, where $Q$ are the states of $\mathcal{A}$.
\end{corollary}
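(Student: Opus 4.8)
The plan is to proceed by induction on the number $k$ of operations applied to obtain $O'$ from $O$, using the three parts of the preceding lemma to push all restrictions past all products. More precisely, I would prove the slightly stronger claim that any arena obtainable from $O$ by finitely many products and restrictions can be written as $(O \times \mathcal{A})\mid_S$ for a single automaton $\mathcal{A}$ with state set $Q$ and a suitable $S \subseteq V \times Q$, where for $k=0$ we take $\mathcal{A}$ to be the one-state automaton accepting everything (so $O \times \mathcal{A} \cong O$) and $S = V \times Q$.

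For the inductive step, suppose $O' = (O \times \mathcal{A})\mid_S$ and we apply one more operation. If we take a product with an automaton $\mathcal{B} = (C, Q'', q_0'', F'', \Delta'')$, then by restricted commutativity and associativity we compute $O' \times \mathcal{B} = \bigl((O \times \mathcal{A})\mid_S\bigr) \times \mathcal{B} = \bigl((O \times \mathcal{A}) \times \mathcal{B}\bigr)\mid_{S \times Q''} \cong \bigl(O \times (\mathcal{A} \times \mathcal{B})\bigr)\mid_{S'}$, where $S' \subseteq V \times (Q \times Q'')$ is the image of $S \times Q''$ under the (iso of the) associativity identification; setting $\mathcal{A}' := \mathcal{A} \times \mathcal{B}$ gives the desired form. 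If instead we restrict $O'$ to a set $T \subseteq V \times Q$ (with $T \subseteq S$, as required for the restriction to be defined on $O'$), then by idempotency $O'\mid_T = \bigl((O \times \mathcal{A})\mid_S\bigr)\mid_T = (O \times \mathcal{A})\mid_{S \cap T} = (O \times \mathcal{A})\mid_T$, which is already of the required form with the same automaton $\mathcal{A}$.

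The one point that needs a little care — and which I expect to be the main (mild) obstacle — is that the lemma's associativity statement is only an isomorphism ($\cong$), not an equality, so when we carry the restriction set $S \times Q''$ across it we must transport it along the canonical bijection $(V \times Q) \times Q'' \to V \times (Q \times Q'')$; one should check that this bijection does indeed commute with the edge and coloring structure so that restricting before or after transport yields isomorphic arenas, and that the resulting $S'$ still satisfies the nonblocking condition $vE' \cap S' \neq \emptyset$ inherited from that of $S$ in $O'$. All of this is routine once the bijection is written down explicitly. Finally, since $\cong$ is all that is claimed in the corollary's conclusion, it suffices to track everything up to this canonical isomorphism, and the induction closes.
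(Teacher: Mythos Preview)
Your proposal is correct and is precisely the intended argument: the paper states this result as an immediate corollary of the preceding lemma without giving a proof, and the induction you spell out---using restricted commutativity to push a new product past the restriction, associativity to merge the two automata, and idempotency to absorb a new restriction---is exactly how the three lemma items are meant to be combined. One small slip: the corollary as stated actually writes $=$ rather than $\cong$, but since the lemma's associativity is only up to isomorphism the equality must be read up to the canonical identification you describe, so your handling of this point is the right one.
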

So if we start with some class of arenas, and then wish to close it under products with automata and restriction, we can just first consider all products of the original arenas with finite automata, and then the restrictions of these; and we obtain the desired class.

\subsection{Regularly-predictable games}
Our theorem requires, informally spoken, that Player 1 always knows whether winning is still possible for him given the history so far and given the current vertex. \emph{Knowing} here means (since we want FM strategies) the existence of a finite automaton producing the answer.

\begin{definition}
\label{def:predictable}
A game $g$ is \textit{regularly-predictable} if for all vertices $v\in V$ of $g$ there exists a finite automaton $\mathcal{A}_v$ that reads an initial color sequence and accept it iff Player $1$ has a winning strategy from $v$ after this sequence. A winning condition is regularly-predictable if all games using it are regularly-predictable.
\end{definition}

Many popular winning conditions are prefix-independent, in which case it does not depend on the history at all whether a player can win from some vertex. Hence, these are trivially regularly-predictable. Reachability and safety conditions, on the other hand, are not prefix-independent, but still regularly-predictable.

At first glance, one may think that FM determinacy implies regular-predictability. This is false, as witnessed by the following proposition, where $K_2$ is the two-clique with self-loops.

\begin{proposition}
Energy games with only a lower bound are not regularly-predictable.
\begin{proof}
Consider a one-player game on $K_2$, with one vertex giving $+1$ energy, and the other giving $-1$. The winning condition is that the current energy level stays non-negative. The player can win from every history where the energy level had not already been negative before. However, deciding this amounts to deciding the language of all binary words containing at least as many $1$s and $0$s, a typical example of a non-regular language. Hence such sequences cannot be recognized by a finite automaton.
\end{proof}
\end{proposition}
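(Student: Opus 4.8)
The plan is to exploit the fact that, although energy games with a single lower bound are memoryless determined, the \emph{boundary} between winnable and non-winnable histories secretly encodes an unbounded counter. Concretely, I would take the one-player arena on $K_2$ (two vertices, each carrying a self-loop and an edge to the other vertex), colour one vertex by the weight $+1$ and the other by $-1$, and let $W$ be the set of weight sequences all of whose finite prefixes have non-negative partial sum. Since the unique player owns every vertex, ``Player~1 wins from $v$ after history $w$'' means precisely that there is an infinite continuation along which every partial sum stays non-negative. Because from the $+1$-vertex one can loop forever, and from the $-1$-vertex one can first move to the $+1$-vertex, this holds iff the partial sums of $w$ itself never dropped strictly below $0$.

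Next I would pin down the language $L_v \subseteq \{+1,-1\}^\ast$ of winning histories from the $+1$-vertex: it is exactly the set of $\pm 1$-words all of whose prefixes have non-negative sum. Then I would show $L_v$ is not regular by a Myhill--Nerode argument: the words $(+1)^n$, for $n \in \mathbb{N}$, are pairwise inequivalent, since appending $(-1)^{n}$ keeps the word in $L_v$ while appending $(-1)^{n+1}$ takes it out (whereas for $(+1)^m$ with $m \ge n+1$, appending $(-1)^{n+1}$ still stays in $L_v$). Hence $L_v$ has infinitely many residual languages and no finite automaton recognises it. Applying Definition~\ref{def:predictable} at the vertex $v$, the game is not regularly-predictable, and therefore ``energy with only a lower bound'' is not a regularly-predictable winning condition.

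The argument is essentially routine; the only points requiring care are bookkeeping ones. One must check that the chosen arena is admissible (every vertex has an outgoing edge, which the self-loops guarantee) and that the winning condition is genuinely an instance of the ``energy with only a lower bound'' family from Section~\ref{sec:sota}. One must also make sure the reduction from ``history $w$ is winning'' to ``$w \in L_v$'' is phrased over \emph{colour} sequences, in line with the convention that the automata $\mathcal{A}_v$ read colours; here colours and weights coincide, so nothing subtle happens. There is no genuine obstacle: the whole point is that even a single lower bound forces any would-be predictor to remember an unbounded quantity, which finite automata cannot do.
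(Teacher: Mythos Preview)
Your proposal is correct and follows essentially the same approach as the paper: the same one-player arena on $K_2$ with weights $\pm 1$, the same characterisation of winnable histories as those whose prefixes all have non-negative sum, and the same conclusion that this language is not regular. You are in fact more careful than the paper (which loosely calls it ``the language of all binary words containing at least as many $1$s and $0$s''), and you spell out the Myhill--Nerode separation explicitly, but the underlying idea is identical.
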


The proposition above together with the example below shows that regular-predictability and existence of FM hSPE are of incomparable strength.

\begin{example}
\label{example:irregular}
Let $W$ be the winning condition for Player $1$ that consists of the non-regular sequences in $\{0,1\}^\omega$: it is prefix-independent, so every game is regularly-predictable. Yet, the game over $K_2$ (with self-loops) involving colors $0$ and $1$ and where only Player $1$ plays is winnable but not via FM strategies (as it would contradict non-regularity of the sequences).
\end{example}

\subsection{Boolean combination}
The following definition suggests how we will combine the well-behaved regularly-predictable winning conditions to conditions definable by regular languages.

\begin{definition}[Regular combination of winning conditions]\label{def:rc-wc}
Let $\mathcal{W}$ be a class of subsets of $C^\omega$ that is closed under Boolean combination. The combination of $\mathcal{W}$ with $l$ regular languages, denoted $R_l(\mathcal{W})$, is defined as follows. Let $\varphi(w_1,\dots,w_k, r_1,\dots,r_l)$ be a propositional-logic formula with $k+l$ variables. Let $W_1,\dots,W_k \in\mathcal{W}$. For $1 \leq i \leq l$ let $L_i$ be a regular language over $C$. Then the winning condition $\{\rho\in C^\omega\mid\varphi(\rho \in W_1,\dots,\rho \in W_k, Pref(\rho) \cap L_1 = \emptyset, \dots, Pref(\rho) \cap L_l = \emptyset)\}$ is in $R_l(\mathcal{W})$.
\end{definition}

Intuitively, we allow for any Boolean combination between variables that either represent satisfaction of an objective from $\mathcal{W}$ (which we will require to be regularly-predictable) or represent the existence of a prefix of the play within a given regular language. Note that it is not restrictive for $\mathcal{W}$ to be closed under Boolean combination: we present our framework as such to allow for very general combinations (e.g., in the toy example from Section~\ref{sec:example}), but, for less amenable winning conditions, one can still take $\mathcal{W}$ to have four elements: a winning condition of interest, its complement, and the empty/universal winning conditions.

\subsection{Regular combinations preserve FM determinacy}

We are now able to state our main result. Due to the technicality of its proof, we first give a high-level sketch. The formal proof follows.

\begin{theorem}\label{thm:color-hspe}\hfill
\begin{itemize}
\item Let $\mathcal{O}$ be a class of arenas that is closed under product with finite automata and under simple restriction.
\item Let $\mathcal{W}$ be a class of winning conditions that is closed under Boolean combination.
\item Let all games in $\mathcal{O} \times \mathcal{W}$ be regularly-predictable and have FM hSPE.
 \end{itemize}
 Then all games in $\mathcal{O} \times R_l(\mathcal{W})$ are regularly-predictable and have FM hSPE. (for all $l \in \mathbb{N}$)
\end{theorem}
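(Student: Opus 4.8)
\textit{Proof proposal.} The plan is to induct on $l$. For $l=0$ there is nothing to do: a Boolean combination of elements of $\mathcal{W}$ lies again in $\mathcal{W}$ by closure, so $R_0(\mathcal{W})=\mathcal{W}$ and the statement is precisely the last hypothesis. For the inductive step I fix a game $(O,W')\in\mathcal{O}\times R_{l+1}(\mathcal{W})$ built from a propositional formula $\varphi$, conditions $W_1,\dots,W_k\in\mathcal{W}$ and regular languages $L_1,\dots,L_{l+1}$, and I peel off $L_{l+1}$. Let $\mathcal{B}$ be a DFA for $L_{l+1}$ made \emph{latching} (upon entering an accepting state it moves to an absorbing sink state), and put $\widehat{O}=O\times\mathcal{B}\in\mathcal{O}$. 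Its vertices split into the region $G$ where the sink has been reached --- equivalently, where some prefix has already met $L_{l+1}$ --- and its complement $H$; since $G$ is closed under edges, $\widehat{O}\mid_G\in\mathcal{O}$. Substituting a truth value for $r_{l+1}$ in $\varphi$ yields conditions $W'_{\top}$ (value \emph{true}) and $W'_{\bot}$ (value \emph{false}), both in $R_l(\mathcal{W})$; a play that stays in $H$ forever satisfies $W'$ iff it satisfies $W'_{\top}$, while a play that enters $G$ (and hence stays there) satisfies $W'$ iff it satisfies $W'_{\bot}$. By the induction hypothesis $(\widehat{O},W'_{\top})$ and $(\widehat{O}\mid_G,W'_{\bot})$ are regularly-predictable and admit FM hSPE; write $\tau$ for an FM hSPE of the latter and $\mathcal{C}_y$ for the automata witnessing its regular-predictability at the vertices $y\in G$.

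Next I solve the game on $\widehat{O}$. On $G$ it coincides with $(\widehat{O}\mid_G,W'_{\bot})$, so Player~1 wins from $y\in G$ after $w$ iff $w\in L(\mathcal{C}_y)$. For $H$ I take the product with $\mathcal{C}=\prod_{y\in G}\mathcal{C}_y$, so $\widehat{O}\times\mathcal{C}\in\mathcal{O}$, and split the region $G$ of this product into the vertex set $G^{\mathrm{win}}$ of pairs $(y,c)$ where $c$ certifies that Player~1 wins $W'_{\bot}$ from $y$, and its complement $G^{\mathrm{lose}}$. Using the hSPE property of $\tau$ together with regular-predictability, one checks that from a $G^{\mathrm{lose}}$-vertex no $G^{\mathrm{win}}$-vertex is reachable, and that from $x\in H$ Player~1 wins $(\widehat{O}\times\mathcal{C},W')$ iff he can guarantee that the play either stays in $H$ forever and satisfies $W'_{\top}$, or enters $G$ at a $G^{\mathrm{win}}$-vertex. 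I then peel this apart by two attractor computations: on $\mathrm{Attr}_1(G^{\mathrm{win}})$ Player~1 wins (drive to $G^{\mathrm{win}}$, then switch to $\tau$); on the complementary sub-arena, Player~1 loses from Player~2's attractor to $G^{\mathrm{lose}}$; and on what remains --- a sub-arena of $\widehat{O}\times\mathcal{C}$, hence in $\mathcal{O}$ --- Player~1 can avoid $G$ altogether, so the objective there is exactly $W'_{\top}$ and the induction hypothesis applies once more. Regular-predictability of $(\widehat{O}\times\mathcal{C},W')$ is assembled from the finitely many automata obtained above, and an FM hSPE is assembled from the induction-hypothesis hSPE on the two sub-arenas together with memoryless attractor strategies; the delicate part is to verify the equilibrium implications at \emph{every} color history and vertex, including hypothetical ones --- which is exactly what makes hSPE (rather than plain SPE) the right notion here, since the sub-arena solutions must remain valid once embedded into the larger game. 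Finally, regular-predictability and the existence of an FM hSPE transfer from $\widehat{O}\times\mathcal{C}$ back to $\widehat{O}$ and then to $O$: a color history determines the state of every automaton multiplied in, there are finitely many lifted vertices above each vertex of $O$, and products preserve the colors of plays.

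I expect two points to carry the weight. The first is the structural claim about $H$: turning ``from here Player~1 can still win'' into a bona fide winning condition over a product arena needs the regular-predictability of the $W'_{\bot}$-subgame (to identify $G^{\mathrm{win}}$ by a state property after the product with $\mathcal{C}$) and its hSPE (so that ``Player~1 \emph{can} win from a $G$-vertex'' coincides with ``Player~1 \emph{does} win there when both sides keep playing'', which is what makes the reduction to reachability of $G^{\mathrm{win}}$ exact). The second, and the real obstacle, is the hSPE bookkeeping: one must glue the hSPE of the two sub-arenas and the attractor strategies into a single FM profile and check the equilibrium implications uniformly over all histories, all the while staying within the two permitted operations on arenas --- products and restrictions --- rather than performing the tempting surgery of replacing $G^{\mathrm{win}}$ and $G^{\mathrm{lose}}$ by fresh absorbing sinks, which would in general take us outside $\mathcal{O}$.
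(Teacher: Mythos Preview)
Your proposal is correct and follows essentially the same architecture as the paper's proof: induct on $l$, peel off one regular language by taking a product with its automaton, use the induction hypothesis with $r_{l+1}\mapsto\bot$ on the region where the language condition is already falsified, take a further product with the regular-predictability automata of that subgame so that ``Player~1 wins after falsification'' becomes a state property, perform an attractor decomposition, and apply the induction hypothesis with $r_{l+1}\mapsto\top$ on the residual sub-arena. The paper computes the two attractors ($S'_1$ toward $S_1$, $S'_2$ toward $S_2$) in parallel rather than sequentially, and applies the $\bot$-instance of the induction hypothesis on the original arena $O$ rather than on $\widehat{O}\mid_G$, but these are cosmetic differences.

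One small correction: your claim that ``from a $G^{\mathrm{lose}}$-vertex no $G^{\mathrm{win}}$-vertex is reachable'' is too strong. From a Player~2-controlled vertex in $G^{\mathrm{lose}}$ there may well be an edge into $G^{\mathrm{win}}$ (a blunder by Player~2). What is true, and what your argument actually needs, is that $G^{\mathrm{lose}}$ is a \emph{trap for Player~1}: from a Player~1 vertex in $G^{\mathrm{lose}}$ every successor lies in $G^{\mathrm{lose}}$, and from a Player~2 vertex in $G^{\mathrm{lose}}$ at least one successor does. This is exactly what guarantees $G^{\mathrm{lose}}\cap\mathrm{Attr}_1(G^{\mathrm{win}})=\emptyset$ and makes the sequential attractor decomposition go through.
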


\begin{proof}[Proof sketch]
The proof of Theorem~\ref{thm:color-hspe} proceeds by induction on $l$ from Definition~\ref{def:rc-wc}. The basic idea of the induction step is to do the product of the arena of the game by a finite automaton accepting the language $L_l$, and then to partition the vertices of the new game into three regions: a region where $Pref(\rho) \cap L_l = \emptyset$ has been falsified, a region where the players can force the play\footnote{Slight abuse of notation as plays are formally defined as sequences of colors only: for the sake of readability, we sometimes use a play to refer to the corresponding sequence of vertices in the graph.} to reach the part of the first region that they like, and a third region restricted to the remaining vertices, where $Pref(\rho) \cap L_l = \emptyset$ holds, and no player has an incentive to leave the third region.

For the first and third regions, a suitable strategy profile will be provided by induction hypothesis, for the second region reachability analysis will do. These profiles together will be translated back into one single profile for the original game. A problem arises: a possible history in the original game may correspond in the product game to a history starting in the third region, going down the second region and coming back to the third. Thus the induced color sequence may fall out of the domain of the profile for the third region. Therefore we need to cope with ``impossible histories'', and use hSPE rather than the more familiar SPE.

A second problem arises because of the aforementioned reachability analysis. To show this, let us slightly detail the basic idea of the proof. Let $\mathcal{A}$ be a finite automaton recognizing the words with a prefix in $L_l$. The product $O \times \mathcal{A}$ looks like Fig.~\ref{fig:prod1}, where $T$ refers to the vertices where $Pref(\rho) \cap L_l = \emptyset$ has not (yet) been falsified. The bottom part corresponds to the product of $O$ with the final states of $\mathcal{A}$. Fig.~\ref{fig:prod2} invokes the induction hypothesis to get a suitable profile $\sigma_\bot$ for this part of the arena. Then in Fig.~\ref{fig:prod3} we would like to split $T$ into two regions: first, a region $T_1 \cup T_2$, where Player $i$ can force (as suggested by the arrow tips) every play starting in $T_i$ to reach the region she likes in the bottom component; and second, a region $T_0$ that no player wants to leave. The problem is that whether a player can win from a vertex in the bottom component depends on how the vertex was reached, so $T_1$ and $T_2$ cannot be simply defined as subsets of vertices in $O \times \mathcal{A}$. It is possible to reduce the problem to classical reachability, though: the following formal proof considers the product of the original arena $O$ with \textit{all} the automata provided by the regular-predictability assumption (as well as with other automata derived from $L_l$).
\end{proof}

\begin{figure}[thb]
\centering
\begin{minipage}[b]{1.5in}
\centering
\begin{tikzpicture}[shorten >=1pt,node distance=1.4cm, auto]
  \node[state] (x) {$T$};
  \node[state, rectangle] (y) [below of = x] {$Pref(\rho) \cap L_l \neq \emptyset$};

\path[->] (x) edge [bend left = 10] node {} (y)
		(x) edge [bend right = 10] node {} (y);
 \end{tikzpicture}
  \caption{The product arena $O \times \mathcal{A}$.}
  \label{fig:prod1}
 \end{minipage}
\hfill
 \begin{minipage}[b]{1.6in}
\centering
\begin{tikzpicture}[shorten >=1pt,node distance=1.4cm, auto]
  \node[state] (x) {$T$};
  \node[state] (y) [below of = x] {$\sigma_\bot$};

\path[->] (x) edge [bend left = 10] node {} (y)
		(x) edge [bend right = 10] node {} (y);
 \end{tikzpicture}
  \caption{Profile $\sigma_\bot$ provided by induction.}
  \label{fig:prod2}
 \end{minipage}
 \hfill
\begin{minipage}[b]{1.8in}
\centering
\begin{tikzpicture}[shorten >=1pt,node distance=1.4cm, auto]
  \node[state] (x) {$T_0$};
   \node[state] (x1)  [below left of = x]{$T_1$};
   \node[state] (x2)  [below right of = x]{$T_2$};
    \node[state] (y) [below right of = x1] {$\sigma_\bot$};

\path[-] (x) edge node {} (x1)
		(x) edge node {} (x2)
		(x1) edge node {} (x2)		;
\path[->] (x1) edge node {} (y)
		(x2) edge node {} (y);	
 \end{tikzpicture}
  \caption{Splitting $T$ via simplistic reachability analysis.}

 \label{fig:prod3}
\end{minipage}
 \end{figure}

For the formal proof, we need some additional notation: If $I$ is some finite index set, and $(\mathcal{A}_i)_{i \in I}$ is an $I$-indexed family of automata, we denote their product by $\otimes_{i\in I} \mathcal{A}_{i}$.

\begin{proof}[Proof of Theorem~\ref{thm:color-hspe}]
By induction on $l$. The claim holds for $l = 0$ by assumption since $R_0(\mathcal{W}) = \mathcal{W}$, so let us assume that $l > 0$.
Let $g = \langle V_1, V_2, E, \Gamma,W\rangle \in \mathcal{O} \times R_l(\mathcal{W})$, let $\varphi$, $W_1,\dots,W_k$, and $L_1,\dots,L_l$ witness that $W \in R_l(\mathcal{W})$.

Let us fix $\varphi_{\bot}(w_1,\dots,w_k,r_1,\dots,r_{l-1}) \coloneqq \varphi(w_1,\dots,w_k,r_1,\dots,r_{l-1}, \bot)$ , and let $g_{\bot} = \langle V_1, V_2, E, \Gamma,W_{\bot}\rangle$ where $\rho \in W_{\bot}$ if and only if $\varphi_{\bot}(\rho \in W_1,\dots,\rho \in W_k, Pref(\rho) \cap L_1 = \emptyset, \dots, Pref(\rho) \cap L_{l-1} = \emptyset)$. Since $W_{\bot} \in R_{l-1}(\mathcal{W})$, by induction hypothesis let $\sigma_{\bot}$ be an hSPE for $g_{\bot}$ and for all vertices $v\in V$ let $\mathcal{A}_{v}^{\bot} = \langle C,Q_v^{\bot},q_{0,v}^{\bot},F_v^{\bot},\Delta_v^{\bot}\rangle$ witness the regular-predictability of $g_{\bot}$.

In $g$, if and once the current color history and the current vertex have falsified $Pref(\rho) \cap L_l = \emptyset$, two issues are simplified: first, playing in $g$ according to $\sigma_{\bot}$ is optimal for both players; second, for each vertex $v \in V \coloneqq V_1 \cup V_2$, the automaton $\mathcal{A}_{v}^{\bot}$ accounts faithfully for who has a winning strategy when reaching $v$ after a given color history.

When $Pref(\rho) \cap L_l = \emptyset$ has not (yet) been falsified, it is more difficult for players both to know how to play optimally and to know who is the potential winner. A special case arises when a player can force the play to eventually falsify $Pref(\rho) \cap L_l = \emptyset$ while ensuring victory regardless of how falsification happens. Given arbitrary color history and vertex, two types of object are relevant for the players to know whether the special case obtains: for all $v \in V$, the $\mathcal{A}_{v}^{\bot}$ and a finite automaton $\mathcal{B}_v^l = (C,Q_v^l,q_{0,v}^l,F_v^l,\Delta_v^l)$ accepting the finite words $\alpha$ such that $\alpha\Gamma(v)$ has a prefix in $L_l$. (The $\mathcal{B}_v^l$ exist since $L_l$ is regular.) Let $g' \coloneqq \langle O \times \otimes_{u\in V}\mathcal{B}_u^l \times \otimes_{u\in V} \mathcal{A}_{u}^{\bot},W \rangle$, where $O$ is the arena of $g$. Note that to play using finite memory in $g'$ is like to play in $g$ using the same memory used in $g'$, plus some finite memory keeping track of the $\mathcal{B}_v^l$ and the $\mathcal{A}_{v}^{\bot}$.

Let $S_1$ ($S_2$) be the vertices of $g'$, i.e., of the form $(v, (q_u^l)_{u\in V}, (q_u^{\bot})_{u\in V})$ such that $q_v^l \in F_v^l$ and $q_v^{\bot} \in F_v^{\bot}$ ($q_v^{\bot} \notin F_v^{\bot}$). We observe, merely by rephrasing, that given a color history $\alpha$ and a vertex $v$, the special case obtains for Player $1$ ($2$) iff Player $1$ ($2$) has a winning strategy in $g'$ to reach $S_1$  ($S_2$) from $($v$, (q_u^l)_{u\in V}, (q_u^{\bot})_{u\in V})$ where the states $q_u^l$ and $q_u^\bot$ correspond to the states of $\mathcal{B}_u^l$ and the $\mathcal{A}_{u}^{\bot}$, respectively, after reading $\alpha$. Recognizing the special case in $g$ therefore amounts to reachability analysis in $g'$: let $S'_1$ ($S'_2$) be the vertices of $g' \setminus (S_1 \cup S_2)$ from where Player $1$ ($2$) can reach $S_1$ ($S_2$) regardless of how the opponent is playing. So Player $1$ ($2$) has a partial positional strategy in $g'$ that ensures uniformly that $S_1$ ($S_2$) is reached from $S'_1$ ($S'_2$). The combination of the two partial strategies translates back to $g$ into an FM partial strategy profile $\sigma'_\bot$ that uses only the  $\mathcal{B}_v^l$ and the $\mathcal{A}_{v}^{\bot}$ as auxiliary memory, and such that playing according to $\sigma_\bot \cup \sigma'_\bot$ is optimal for both players when the vertex in $g$ corresponds to a vertex in $S_1 \cup S_2 \cup S'_1 \cup S'_2$.

Furthermore, deciding who is the potential winner in $g$ at a vertex after some color history is easy when the corresponding vertex in $g'$ is in $S'_1 \cup S'_2$: if it is in $S'_1$, Player $1$ is the potential winner; otherwise it is Player $2$. So the automaton  $\otimes_{u\in V}\mathcal{B}_u^l \times \otimes_{u\in V} \mathcal{A}_{u}^{\bot}$ will help us witness regular-predictability of $g$ in the next paragraphs.

Let us now consider the remaining case, where $Pref(\rho) \cap L_l = \emptyset$ has not been falsified, and no player is able to force falsification to his own benefit. The corresponding color histories and vertices in $g$ translate in $g'$ to the set $V_{\top} \coloneqq (V\times \otimes_{u\in V} Q_u^l \times \otimes_{u\in V} Q_u^{\bot}) \setminus (S_1 \cup S_2 \cup S'_1 \cup S'_2)$. By definition of reachability, every vertex in $V_{\top}$ with an edge towards $S_1 \cup S_2 \cup S'_1 \cup S'_2$ has also an edge staying in $V_{\top}$. So $O'\mid_{V_{\top}}$ is also an arena, where $O'$ is the arena of $g'$. Moreover, by definition of $S'_1$, if Player $1$ controls a vertex in $V_{\top}$ with an edge towards $S_1 \cup S_2 \cup S'_1 \cup S'_2$, it is an edge towards $S_2 \cup S'_2$, so never taking this edge is optimal. (Likewise for Player $2$.) This implies that every (FM) hSPE in ``$g_{\top} \coloneqq g'\mid_{V_{\top}}$'' (precisely defined below) translates back to $g$ into (FM) partial hSPE, and for all color histories and vertices the potential winner in the restriction is the same as in $g'$. Let $\varphi_{\top}(w_1,\dots,w_k,r_1,\dots,r_{l-1}) \coloneqq \varphi(w_1,\dots,w_k,r_1,\dots,r_{l-1}, \top)$, and let $g_{\top} \coloneqq \langle O'\mid_{V_{\top}}, W_{\top}\rangle$ where $\rho \in W_{\top}$ iff $\varphi_{\top}(\rho \in W_1,\dots,\rho \in W_k, Pref(\rho) \cap L_1 = \emptyset, \dots, Pref(\rho) \cap L_{l-1} = \emptyset)$. Since $W_{\top} \in R_{l-1}(\mathcal{W})$, by induction hypothesis let $\sigma_{\top}$ be an hSPE for $g_{\top}$ and for all vertices $v\in V$ let $\mathcal{C}_{v}^{\top} = \langle C,Q_v^{\top},q_{0,v}^{\top},F_v^{\top},\Delta_v^{\top}\rangle$ witness the regular-predictability of $g_{\top}$.

To show the regular-predictability of $g$, let $v \in V$ and let $\mathcal{A}_v$ be the modification  of $\otimes_{u\in V} \mathcal{B}_u^l \times \otimes_{u\in V} \mathcal{A}_{u}^{\bot} \times  \mathcal{C}_{v}^{\top}$ where the final states $F_v$ are the $((q_u^l)_{u\in V},(q_u^{\bot})_{u\in V},q_v^{\top})$ such that either $(v,(q_u^l)_{u\in V},(q_u^{\bot})_{u\in V}) \in  S_1$ (falsification happened and Player $1$ is the potential winner), or $(v,(q_u^l)_{u\in V},(q_u^{\bot})_{u\in V}) \in  S'_1$ (Player $1$ can force falsification to his own benefit), or $(v,(q_u^l)_{u\in V},(q_u^{\bot})_{u\in V}) \notin S_2 \cup S'_2 \wedge q_v^{\top} \in F_v^{\top}$ (falsification benefiting Player $2$ has not happened and is not going to, and Player $1$ is the potential winner of restricted game). The automaton $\mathcal{A}_v$ accepts exactly the color histories after which Player $1$ has a winning strategy when starting from $v$. 

A strategy profile $\sigma$ for $g$ (which is meant to be an hSPE) is build by case disjunction. Informally, given a color history and a vertex,
\begin{itemize}
\item if the corresponding vertex in $g'$ is in $S_1 \cup S_2$, follow $\sigma_{\bot}$,
\item if it is in $S'_1 \cup S'_2$, follow $\sigma'_{\bot}$,
\item otherwise follow $\sigma_{\top}$.
\end{itemize}
More formally, let $\alpha\in C^*$, let $v \in V$, let $((q_u^l)_{u\in V},(q_u^{\bot})_{u\in V})$ be the state of $\otimes_{u\in V} \mathcal{B}_u^l \times \otimes_{u\in V} \mathcal{A}_{u}^{\bot}$ after reading $\alpha \Gamma(v)$, and let us make a case disjunction.
\begin{itemize}
\item  If $(v,(q_u^l)_{u\in V},(q_u^{\bot})_{u\in V}) \in S_1 \cup S_2$ let $\sigma(\alpha,v) \coloneqq \sigma_{\bot}(\alpha,v)$.
\item If $(v,(q_u^l)_{u\in V},(q_u^{\bot})_{u\in V}) \in S'_1 \cup S'_2$, let  $\sigma(\alpha,v) \coloneqq \pi_1\circ \sigma'_{\bot}(\alpha,v,(q_u^l)_{u\in V},(q_u^{\bot})_{u\in V})$
\item If $(v,(q_u^l)_{u\in V},(q_u^{\bot})_{u\in V}) \in V_{\top}$, let $\sigma(\alpha,v) \coloneqq \pi_1\circ \sigma_{\top}(\alpha,(v,(q_u^l)_{u\in V},(q_u^{\bot})_{u\in V}))$.
\end{itemize}
(Above, projection $\pi_1$ is used twice to retrieve vertex $v$ from vertex $(v,(q_u^l)_{u\in V},(q_u^{\bot})_{u\in V})$.) To implement the above strategy, we use the automaton $\otimes_{u\in V} \mathcal{B}_u^l \times \otimes_{u\in V} \mathcal{A}_{u}^{\bot}$ to keep track of the current vertex in $g'$, and we also use automata implementing $\sigma_{\bot}$, $\sigma'_{\bot}$, and $\sigma_{\top}$.
\end{proof}

\smallskip\noindent\textbf{Prefix-independence.} Many popular winning conditions are prefix-independent, in which case regular-predicta\-bi\-lity comes for free, and the notions of SPE and hSPE coincide. We state this special case explicitly, and also include an upper bound for the required memory.

\begin{corollary}\label{cor:color-spe}
Let $\mathcal{O}$ be a class of arenas that is closed under product with finite automata and under simple restriction. Let $\mathcal{W}$ be a class of prefix-independent winning conditions that is closed under Boolean combination. If all games in $\mathcal{O} \times \mathcal{W}$ have FM SPE, so do all games in $\mathcal{O} \times R_l(\mathcal{W})$.

Furthermore, let $f(l,n)$ be the memory size required to implement the optimal strategies for $n$-vertex games in $R_l(\mathcal{W})$ if the regular languages are recognized by automata with $m$ states each. Then $f(l,n) \leq m^n\cdot f(l-1,n) \cdot f(l-1,nm^{n})$.
\end{corollary}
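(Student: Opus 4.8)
The plan is to deduce the qualitative statement directly from Theorem~\ref{thm:color-hspe}, and then to obtain the memory bound by re-running that theorem's inductive proof while keeping track of sizes.

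For the first assertion, recall that a prefix-independent winning condition is trivially regularly-predictable (whether Player~1 can win from a vertex does not depend on the history, so a one-state automaton witnesses it), and that for prefix-independent conditions the notions of SPE and hSPE coincide. Hence the hypothesis ``all games in $\mathcal{O} \times \mathcal{W}$ have FM SPE'' is exactly the hypothesis of Theorem~\ref{thm:color-hspe}, whose conclusion yields FM hSPE for all games in $\mathcal{O} \times R_l(\mathcal{W})$. Since every hSPE satisfies the SPE conditions on realizable histories with the \emph{same} strategies, hence via the same Moore machines, these are FM SPE. That $R_l(\mathcal{W})$-conditions are themselves not prefix-independent plays no role in this direction.

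For the memory bound I would carry out the induction on $l$ of the proof of Theorem~\ref{thm:color-hspe}, now recording sizes, and I would strengthen the induction hypothesis so that it also bounds the size of the automata witnessing regular-predictability (which are one-state in the base case $l = 0$, by prefix-independence). In the induction step, the profile $\sigma$ constructed there is implemented by a Moore machine that runs the automaton $\otimes_{u} \mathcal{B}_u^l \times \otimes_{u} \mathcal{A}_u^\bot$ to decide which of three sub-strategies to follow: $\sigma_\bot$, the positional reachability strategies giving $\sigma'_\bot$ (which need no memory beyond that automaton), or $\sigma_\top$. Here $\sigma_\bot$ is an hSPE for the $n$-vertex game $g_\bot \in R_{l-1}(\mathcal{W})$, so of size at most $f(l-1, n)$ by induction hypothesis; each $\mathcal{B}_u^l$ has at most $m$ states (it is essentially a prefix-in-$L_l$ test built from an $m$-state automaton for $L_l$), so $\otimes_u \mathcal{B}_u^l$ has at most $m^n$ states; and $\sigma_\top$ is an hSPE for $g_\top = g' \mid_{V_\top} \in R_{l-1}(\mathcal{W})$, so of size at most $f(l-1, N)$ where $N$ is the number of vertices of $g'$. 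If one can show $N \le n m^n$, then assembling the machine gives $f(l, n) \le m^n \cdot f(l-1, n) \cdot f(l-1, n m^n)$; and a parallel count on the predictability automaton of $g$, which is a modification of $\otimes_u \mathcal{B}_u^l \times \otimes_u \mathcal{A}_u^\bot \times \mathcal{C}_v^\top$, closes the strengthened induction.

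The main obstacle is exactly the estimate $N \le n m^n$: read literally, the arena of $g' = O \times \otimes_u \mathcal{B}_u^l \times \otimes_u \mathcal{A}_u^\bot$ includes the product, over all vertices, of the predictability automata of $g_\bot$, and a crude bound on those would multiply the vertex count by an $n$-th power and spoil the recurrence. The work is to choose these witnesses economically, exploiting prefix-independence once more: on the product arena the residual winning condition is tail-determined, so ``who wins from $v$ after $\alpha$'' in $g_\bot$ depends only on $v$ together with the statuses of the regular languages, which is information already carried by the status-tracking automata present in $g'$; hence $\otimes_u \mathcal{A}_u^\bot$ is subsumed and $N \le n m^n$ after all. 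With this size lemma in hand, everything else — the reduction of the first assertion to the theorem, the handling of impossible histories (already dealt with abstractly through hSPE in the theorem), and the composition of the Moore machines — is routine.
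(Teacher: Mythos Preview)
Your approach is exactly the paper's: the qualitative statement is reduced to Theorem~\ref{thm:color-hspe} via the observation that prefix-independence makes regular-predictability trivial and identifies SPE with hSPE, and the bound is obtained by rerunning that theorem's induction while tracking sizes. The paper's own argument is four lines: it asserts that ``thanks to prefix-independence \ldots\ we do not need to worry about regular-predictability and the corresponding automata'', and then counts $\sigma_\bot$ on an $n$-vertex game, positional $\sigma'_\bot$, $\sigma_\top$ on at most $nm^n$ vertices, plus a factor $m^n$ for tracking the position in $g'$.

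You are more careful than the paper precisely where the paper is terse, and you correctly isolate the only delicate point: the factor $\otimes_u \mathcal{A}_u^\bot$ in $g'$ must not inflate the vertex count beyond $nm^n$. Your proposed resolution, however, has a slip. You say predictability of $g_\bot$ ``depends only on $v$ together with the statuses of the regular languages, which is information already carried by the status-tracking automata present in $g'$''. The first clause is right (since the $W_i$ are prefix-independent, who wins $g_\bot$ after $\alpha$ is determined by $v$ and by the states reached by the automata for $L_1,\ldots,L_{l-1}$). The second clause is not: the only language-tracking automaton built into $g'$ in the theorem's proof is $\otimes_u \mathcal{B}_u^l$, which tracks $L_l$, not $L_1,\ldots,L_{l-1}$; so $\otimes_u \mathcal{A}_u^\bot$ is \emph{not} subsumed by what is already present. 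The paper's proof simply asserts the drop without addressing this at all, so your write-up is already at least as complete as the paper's; if you want to actually close the gap you would need an additional argument --- for instance, that all the $\mathcal{A}_u^\bot$ can be chosen with a common transition structure (only the accepting sets vary with $u$), so their product over $u$ collapses to a single copy, and then carry its size explicitly through the strengthened induction you already propose.
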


\begin{proof}
Let us use the notation from the proof of Theorem~\ref{thm:color-hspe}. Thanks to prefix-independence of the winning condition, we do not need to worry about regular-predictability and the corresponding automata. Since $\sigma_{\bot}$ is an hSPE on a $n$-vertex game, $f(l-1,n)$ states suffice to implement it. The $\sigma'_{\bot}$ for reachability can be chosen memoryless. As $\sigma_{\top}$ is an hSPE on a game with at most $nm^{n}$ vertices, $f(l-1,nm^{n})$ states suffice to implement it. Additionally, we keep track of the current vertex in $g'$, adding a factor of $m^n$. Therefore $f(l,n) \leq m^n\cdot f(l-1,n) \cdot f(l-1,nm^{n})$.
\end{proof}

Note that while Corollary \ref{cor:color-spe} provides bounds on the memory requirements, these bounds are non-elementary. Such horrible bounds are unfortunately inevitable given the generality of the framework. In Sect.~\ref{sec:special}, we provide results for specific Boolean combinations that yield much more reasonable memory bounds.

Let us mention that very simple cases of combinations already lead to large lower bounds in terms of memory. For example, exponential memory is required for conjunctions of fully-bounded energy conditions, while they are actually all covered by the regular languages part of $R_l(\mathcal{W})$~\cite{DBLP:conf/fossacs/BouyerHMR017}. The same holds for conjunctions of window objectives~\cite{DBLP:journals/iandc/Chatterjee0RR15,DBLP:journals/corr/BruyereHR16}.

\smallskip\noindent\textbf{Back to the example.} We can now apply our theorem to the example of Sect.~\ref{sec:example}.

\begin{corollary}\label{cor:md-be-mg-fmd}
Multi-dimension bounded-energy Muller games have FM SPE (aka combination of optimal strategies).
\begin{proof}
Let $\mathcal{O}$ be the class of the finite arenas, which is clearly closed under simple restriction and product by finite automaton. Let $\mathcal{W}$ be the class of the (prefix-independent) Muller winning conditions, which is closed under Boolean combination. All games in $\mathcal{O} \times \mathcal{W}$ have FM SPE~\cite{DBLP:conf/lics/DziembowskiJW97}, and so do all games in $\mathcal{O} \times R_l(\mathcal{W})$ by Corollary~\ref{cor:color-spe}.

To show that this applies to multi-dimension bounded-energy Muller games, it suffices to see that fully-bounded energy conditions can be expressed as regular languages, which is trivial since the sum of weights must be constrained within the two bounds at all times, hence can only take a bounded number of integer values.
\end{proof}
\end{corollary}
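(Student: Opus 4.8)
The plan is to instantiate Theorem~\ref{thm:color-hspe}, in its prefix-independent form Corollary~\ref{cor:color-spe}, with $\mathcal{O}$ the class of all finite arenas and $\mathcal{W}$ the class of Muller winning conditions, and then to exhibit any multi-dimension bounded-energy Muller game as an element of $\mathcal{O} \times R_{n+m}(\mathcal{W})$.

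\textbf{Step 1: the ingredients of the meta-theorem.} First I would check that the class $\mathcal{O}$ of all finite arenas is closed under simple restriction and under product with finite automata; this is immediate, and already noted right after the corollary of the arena-algebra lemma. Next I would take $\mathcal{W}$ to be the Muller conditions over $C$, i.e., sets of the form $\{\rho \in C^\omega : \mathrm{Inf}(\rho) \in \mathcal{F}\}$ for some $\mathcal{F} \subseteq 2^C$; by the Remark on colors we may assume each vertex is colored by itself, so the Muller conditions of Definition~\ref{def:mdbemg} are of this shape. Such conditions are prefix-independent, and they are closed under Boolean combination: the complement, union and intersection of Muller conditions again only constrain $\mathrm{Inf}(\rho)$, corresponding respectively to $2^C \setminus \mathcal{F}$, $\mathcal{F}_1 \cup \mathcal{F}_2$ and $\mathcal{F}_1 \cap \mathcal{F}_2$. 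Finally I would invoke the classical result that Muller games on finite arenas are finite-memory determined, with a memory bound uniform over all starting vertices~\cite{DBLP:conf/lics/DziembowskiJW97}; since Muller conditions are prefix-independent, this yields FM SPE (play Player~1's uniform winning FM strategy on her winning region and Player~2's on the complementary region; prefix-independence makes the resulting profile subgame-perfect). Corollary~\ref{cor:color-spe} then gives FM SPE for all games in $\mathcal{O} \times R_l(\mathcal{W})$, for every $l \in \mathbb{N}$.

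\textbf{Step 2: bounded energy as regular languages.} It remains to place a game from Definition~\ref{def:mdbemg}, with parameters $U_1,\dots,U_p$, bound $b \in \mathbb{N}^{n+m}$, labels in $\mathbb{Z}^{n+m}$ and formula $\varphi$, inside $\mathcal{O} \times R_{n+m}(\mathcal{W})$. The $p$ Muller variables $x_i$ are witnessed by conditions $W_i \in \mathcal{W}$. For each battery-like coordinate $i \leq n$, I would track the running sum of the $i$-th weights, starting from the fixed initial level and clamped from above into $\{0,\dots,b_i\}$: this quantity is finite-state until it would drop below $0$, so the set $L_i$ of finite color words after which it first becomes negative is regular, and the $i$-th battery condition holds on $\rho$ exactly when $Pref(\rho)\cap L_i = \emptyset$. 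For each spill-over-like coordinate, the unclamped running sum must remain in $[0,b_{i+n}]$; while it does so it is again finite-state, so the set $L_{i+n}$ of finite words after which it first leaves $[0,b_{i+n}]$ is regular, and the condition holds exactly when $Pref(\rho)\cap L_{i+n} = \emptyset$. Hence, writing $w_j \coloneqq (\rho \in W_j)$ and $r_i \coloneqq (Pref(\rho)\cap L_i = \emptyset)$, the winning condition of the game is $\{\rho \in C^\omega : \varphi(w_1,\dots,w_p, r_1,\dots,r_{n+m})\} \in R_{n+m}(\mathcal{W})$, so the game lies in $\mathcal{O}\times R_{n+m}(\mathcal{W})$ and Step~1 applies.

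\textbf{Main obstacle.} Given the machinery already developed, no step is genuinely hard; the point that requires the most care is Step~1, specifically passing from ``Muller games are FM determined'' to ``Muller games have FM SPE'': one must use that the memory bound of~\cite{DBLP:conf/lics/DziembowskiJW97} is uniform over starting vertices, and that prefix-independence lets one stitch together the two players' uniform winning strategies over their respective winning regions into a single finite-memory subgame-perfect profile. A secondary and purely cosmetic subtlety, already flagged in the paper's Remark on colors, is that vertex-based Muller conditions are expressed in the color formalism by coloring vertices with themselves, which is harmless here.
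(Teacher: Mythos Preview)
Your proposal is correct and follows essentially the same route as the paper: instantiate Corollary~\ref{cor:color-spe} with $\mathcal{O}$ the finite arenas and $\mathcal{W}$ the Muller conditions, then observe that fully-bounded energy conditions are regular. Your write-up merely elaborates on points the paper leaves implicit---the closure of Muller conditions under Boolean operations, the passage from uniform FM determinacy to FM SPE via prefix-independence, and the separate treatment of the battery-like versus spill-over-like variants---but the structure of the argument is the same.
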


\section{Applications}
\label{sec:app}

Let us compare the hypotheses and concepts used in Theorem~\ref{thm:color-hspe} with the classical (combinations of) winning conditions from the literature, already discussed in Sect.~\ref{sec:sota}, to get a better grasp of its applicability.
Interesting winning conditions for our theorem fall in two categories: regular languages and regularly-predictable conditions. First, observe that any condition that can be defined as a regular language is trivially regularly-predictable. (The latter is a more general notion.) Hence, we first discuss the regular languages.

\smallskip\noindent\textbf{Regular languages.} Among the simplest conditions that can be recognized through finite automata (hence expressed as regular languages) lie \textit{reachability} and \textit{safety} conditions. Indeed, as mentioned in Sect.~\ref{sec:intro}, regular languages are essentially a compact way to represent \textit{safety-like} winning conditions. For example, fully-bounded energy conditions (both battery-like and spill-over-like) can also be encoded through finite automata whose size depends on the upper bound. Window objectives (both for mean-payoff and parity), thanks to their finite window mechanism, can also be represented as regular languages. All the other objectives discussed in Sect.~\ref{sec:sota} cannot.

\smallskip\noindent\textbf{Regularly-predictable conditions.} Recall that for Theorem~\ref{thm:color-hspe}, we require the class $\mathcal{W}$ to be closed under Boolean combination and such that winning conditions are regularly-predictable and admit FM hSPE. Let us review the classical objectives.
\begin{itemize}
\item \textit{Regular languages.} As stated above, conditions expressed as regular languages can be used. Hence, this easily permits to rediscover FM determinacy results for multi-dimension fully-bounded energy games~\cite{bouyer,DBLP:journals/acta/BouyerMRLL18,DBLP:conf/fossacs/BouyerHMR017} or conjunctions of window objectives~\cite{DBLP:journals/iandc/Chatterjee0RR15,DBLP:journals/corr/BruyereHR16}, and even extend them to full Boolean combinations.
\item \textit{Parity and Muller.} Any combination of such conditions can be expressed in the closed class. Furthermore they are trivially regularly-predictable (because they are prefix-independent) and admit FM hSPE. Hence, these conditions can be mixed in any Boolean combination with regular languages and retain FM determinacy. This lets us rediscover FM determinacy results for generalized parity games~\cite{DBLP:conf/fossacs/ChatterjeeHP07}, or combinations of parity conditions with window conditions~\cite{DBLP:conf/concur/BruyereHR16}, and extend them to full Boolean combinations.
\item \textit{Mean-payoff.} The mean-payoff condition is regularly-predictable (as it is prefix-indepen\-dent) and admits FM hSPE. Unfortunately, this does not hold for Boolean combinations of mean-payoff~\cite{DBLP:journals/iandc/VelnerC0HRR15,DBLP:conf/fossacs/Velner15}. Still, one can take $\mathcal{W}$ as the trivial class containing one mean-payoff condition and its complement, and use it in Boolean combinations with regular languages.
\item \textit{Average-energy, total-payoff and energy with no upper bound.} These three conditions are not regularly-predictable as one needs to be able to store an arbitrarily large sum of weights in memory to decide if Player~1 can win from a given prefix. Hence our theorem cannot be applied to these conditions.
\end{itemize}

\smallskip\noindent\textbf{Theorem applicability.} Observe that our theorem cannot be applied to Boolean combinations of mean-payoff, average-energy and total-payoff, or to combinations of mean-payoff and parity (two regularly-predictable conditions but which cannot be put in the same closed class $\mathcal{W}$), four cases in which \textit{indeed FM determinacy is not preserved}~\cite{DBLP:journals/iandc/VelnerC0HRR15,DBLP:conf/fossacs/Velner15,DBLP:journals/acta/BouyerMRLL18,DBLP:journals/iandc/Chatterjee0RR15,DBLP:conf/lics/ChatterjeeHJ05}. On the other hand, we recover many known results from the literature~\cite{bouyer,DBLP:journals/acta/BouyerMRLL18,DBLP:conf/fossacs/BouyerHMR017,DBLP:journals/iandc/Chatterjee0RR15,DBLP:journals/corr/BruyereHR16,DBLP:conf/fossacs/ChatterjeeHP07,DBLP:conf/concur/BruyereHR16} and are able to extend them to more general combinations (or to completely novel ones).

It remains to discuss \textit{corner cases}: combinations that are known to preserve FM determinacy but are not covered by our theorem. We are aware of three cases from the literature, all involving the energy condition without an upper bound: (a)~conjunctions of energy conditions~\cite{DBLP:journals/acta/ChatterjeeRR14,DBLP:conf/icalp/JurdzinskiLS15}, (b)~conjunctions of energy and parity conditions~\cite{DBLP:journals/tcs/ChatterjeeD12,DBLP:journals/acta/ChatterjeeRR14}, (c)~conjunctions of energy and a single average-energy condition~\cite{DBLP:conf/fossacs/BouyerHMR017}. It is very interesting to spend a moment on these corner cases. Indeed, the ad-hoc techniques used to prove FM determinacy in all three cases intuitively rely on proving equivalence with games where the energy condition can be bounded \textit{both} from below \textit{and from above}, for a sufficiently large bound. Yet, we know that such fully-bounded energy conditions define regular languages. Hence, for cases (a) and (b) we actually retrieve applicability of our theorem, leaving case~(c) as the only case, to our knowledge, of preservation of FM determinacy in the literature which is not covered by Theorem~\ref{thm:color-hspe}. This is because the average-energy condition is not regularly-predictable (as one can see in~\cite{DBLP:journals/acta/BouyerMRLL18,DBLP:conf/fossacs/BouyerHMR017}, it behaves rather oddly in comparison to all other classical objectives).

\smallskip\noindent\textbf{FM determinacy vs.~FM hSPE}. We can say more on case~(c), and for that we want to highlight once again that the result we obtain in Theorem~\ref{thm:color-hspe} deals with a stronger concept than FM determinacy, namely the existence of FM (h)SPE. As seen above, these two notions do coincide in virtually all cases studied in the literature. Now, case~(c) is actually the only setting, to our knowledge, where they do not, as proved in the following example.

\begin{example}
Consider the arena in Fig.~\ref{fig:AEL}, colored by integers: vertex $a$ has weight $1$, $b$ weight $-1$ and $c$ weight $0$. The objective of Player $1$ (circle) is to have the average-energy (AE, limit of the average energy level) less than or equal to zero while keeping the energy level (EL) non-negative at all time. Such games are FM determined~\cite{DBLP:conf/fossacs/BouyerHMR017}, and in this case, Player $2$ (diamond) has a trivial strategy to win from $a$: looping forever. He also wins from $b$ (the energy directly drops below zero), and Player $1$ wins from $c$.

Now, consider SPE. Observe that when in vertex $c$, the EL does not change anymore so the AE is actually equal to the EL when you reach $c$. Thus, Player $1$ can win if he reaches $c$ with an EL of zero (and it is the only way). In an SPE, we need to consider all possible histories. Imagine an history $w = a^n$ for some $n \geq 1$: Player 1 wins by looping exactly $n$ times in $b$ before reaching $c$. This defines a subgame-perfect strategy for Player 1 which consists in looping in $b$ for as many times as Player $2$ looped in $a$. Clearly, this SPE requires infinite memory. We thus have an FM determined game where no FM SPE exists.

\begin{figure}[tb]
\begin{center}
\begin{tikzpicture}[node distance=1cm, auto]
  \node (R) {$v_0$};
  \node[draw,diamond] (A) [right=of R] {$a$};
  \node[draw,circle] (B) [right=2cm of A] {$b$};
  \node[draw,circle] (C) [right=2cm of B] {$c$};
  \node (w1) at (22mm,3mm) {$1$};
  \node (w2) at (50mm,3mm) {$-1$};
  \node (w3) at (74mm,3mm) {$0$};
  \draw[->] (R) to node {} (A);
  \draw[->] (A) to node {} (B);
  \draw[->] (B) to node {} (C);
   \draw[->] (C) edge [loop right] node {} ();
   \draw[->] (A) edge [loop above] node {} ();
   \draw[->] (B) edge [loop above] node {} ();
\end{tikzpicture}
\end{center}
\vspace{-5mm}
\caption{Average-energy games with lower-bounded energy are FM determined but do not admit FM SPE. Vertices are colored by integer weights.}
\label{fig:AEL}
\end{figure}
\end{example}

\section{Discussion of the requirements}
\label{sec:req}
We discuss the requirements of Theorem~\ref{thm:color-hspe} and explore whether improvement seems plausible.

\smallskip\noindent\textbf{Arenas.} First, we inspect the conditions pertaining to the arenas. Typically all arenas are considered, and the class of all arenas trivially satisfies the closure properties we demand. These properties are only relevant if we need to use specific subsets of the arenas to prove the required properties of the games. To ask that the class of arenas we consider is closed under products with finite automata is a very mild condition. Apart from arenas of bounded size, we would expect all naturally occurring classes of arenas to have this property.

Requiring closure under simple restriction is more restrictive (e.g.,~the games we consider in the upcoming Example \ref{ex:subgamenotimplied} are not closed under simple restriction). To see that this is necessary, let $O'$ be a simple restriction of an arena $O$ such that the game played on $O$ is FM determined, but the one played on $O'$ is not. W.l.o.g., assume that $O'$ has no deadlock. We can obtain a new winning condition for a game played on $O$ by combining the original condition with a regular language expressing informally that, if the game ever reaches a vertex in $O \setminus O'$, the last player to move loses, and otherwise, the winner is the winner of the original winning condition. The resulting game is not FM determined, thus witnessing the need for our requirement of closure under simple restriction.

Furthermore, as soon as we are demanding that our arenas are closed under products with automata and simple restriction, we see that we need to require the existence of FM SPE rather than merely FM determinacy (up to uniformity). The reason is that by taking a suitable combination of products and restriction, we generate an arena that first produces a predetermined finite color sequence before starting the original game.

\smallskip\noindent\textbf{Regular languages.} Theorem~\ref{thm:color-hspe} is tight in the sense that it fails if making the conjunction of the universal winning condition and a condition derived from any irregular language. Indeed, Example~\ref{example:irregular} already showed that FM hSPE may not exist for such conditions.

\smallskip\noindent\textbf{FM determinacy vs.~FM hSPE.} Being able to win using bounded finite memory, and being able to decide who wins from a given history with a finite automaton (i.e., regular-predictability) together do not suffice to imply the existence of an FM subgame-perfect strategy, as observed in the next example.

\begin{example}
\label{ex:subgamenotimplied}
Consider an arena with colors $0, 1, \alpha, \beta$, where a vertex has an $\alpha$-successor iff it has a $\beta$-successor, and each such vertex is controlled by Player $1$ (let us call them $\alpha\beta$-vertices). Player $1$ wins any play $\rho$ that has a prefix of the form $w\alpha$ with $w \in \{0,1\}^*$ and $w$ has the same number of $0$ and $1$s, and any play $\rho$ that has a prefix of the form $w\beta$ with $w \in \{0,1\}^*$ and $w$ has different numbers of $0$ and $1$s. Hence, for Player 1 to win, it suffices to reach an $\alpha\beta$-vertex and pick the $\alpha$ (resp.~$\beta$) successor if the play contains an equal (resp.~a different) number of $0$ and $1$s.
So there is a simple finite automaton deciding from which histories Player $1$ can win, and linear memory suffices for Player 1 to win from any winnable history, but Player $1$ has no FM subgame-perfect strategy.
\end{example}

\begin{proof}
Player $1$ wins from some not-yet-determined history iff he can force an $\alpha\beta$-vertex. Linear memory suffices to keep track of how many $0$ and $1$'s have been encountered along the way (as a simple path suffices), and enables Player $1$ to choose correctly. However, which choice is correct depends on the pre-history in a way that a finite automaton cannot keep track of (as this history is unbounded). Thus, there is no FM subgame-perfect strategy.
\end{proof}

In Sect.~\ref{sec:app}, we also discussed the case of conjunctions of energy and a single average-energy condition~\cite{DBLP:conf/fossacs/BouyerHMR017}, which is also FM determined, but do not have FM hSPE, for similar reasons.

Note that the games  above are not closed by simple restriction. As a further example, we show that the condition on subgame-perfect strategies is not dispensable entirely.

\begin{example}
Consider the arena in Fig.~\ref{fig:SPE} where Player $1$ controls circle vertices and Player~$2$ controls diamond vertices.
\begin{figure}[tb]
\begin{center}
\begin{tikzpicture}[node distance=1cm, auto]
  \node (R) {$v_0$};
  \node[draw,circle] (A) [below=2mm of R] {$a$};
  \node[draw,diamond] (B) [right=2cm of A] {$b$};
  \node[draw,diamond] (C) [right=2cm of B] {$c$};
  \node[draw,circle] (D) [left=of A] {$d$};
  \node[draw,circle] (E) [left=2cm of D] {$e$};
  \draw[->] (R) to node {} (A);
  \draw[->] (A) to node {} (D);
  \draw[->] (A) to[out=30, in=150] node {} (B);
  \draw[->] (B) to[out=-150, in=-30] node {} (A);
  \draw[->] (B) to node {} (C);
  \draw[->] (D) to[out=-150, in=-30] node {} (E);
  \draw[->] (E) to[out=30, in=150] node {} (D);
   \draw[->] (C) edge [loop right] node {} ();
     \draw[->] (E) edge [loop left] node {} ();
\end{tikzpicture}
\end{center}
\vspace{-8mm}
\caption{FM SPEs are needed for transfer in combinations, not only FM determinacy.}
\label{fig:SPE}
\end{figure}
Let $W$ be defined such that Player $1$ wins iff either the play ends with $c^\omega$, or $b$ occurs an even number of times and the play ends with $deedeeedeeee\ldots$, or $b$ occurs an odd number of times and the play ends with $e^\omega$. Player $1$ has an FM winning strategy: play towards $b$ first; if the game returns to $a$, play $d$ and then loop at $e$. Player $1$ also has a subgame-perfect strategy, but no FM subgame-perfect strategy (as it needs to react to all histories, hence also the ones where $b$ happens an even number of times).

If we take the conjunction of $W$ with avoiding the regular language of words containing $c$, then in the resulting game, Player $1$ still wins, but he has no FM winning strategy.
\end{example}

\section{Specific results for conjunctions and disjunctions}
\label{sec:conj}
\label{sec:special}
The ability of our main theorem to deal with arbitrary Boolean formulae is often not needed. In fact, as discussed in Sect.~\ref{sec:sota}, in the literature on multi-dimension games, it is very common to work only with conjunctions on the winning conditions of Player 1 (which dually means using disjunctions for Player 2). In this section, we give some direct constructions for these cases, which have different requirements or conclusions from our main theorem -- and in particular, come with much more reasonable bounds than provided by Corollary \ref{cor:color-spe}.

As an outline, we give a brief account of the main constructions here. (i)~For simple disjunctions for Player 2, we obtain FM determinacy results without the regularity assumption on the combined language-based condition. (ii)~For simple conjunctions for Player 1, we obtain FM determinacy results and better memory bounds without requiring regular-predictability (but requiring regularity of the language). As corollary, we regain known bounds on fully-bounded energy games. (iii)~For the case of subgame-perfect strategies (not required in (i) and (ii)), we obtain better memory bounds. The interest of these side results, apart from improved bounds for still general classes of games, is to illustrate how the different hypotheses of Theorem~\ref{thm:color-hspe} interact and how restrictions on some dimensions of the problem permits to be more general on other dimensions.

The reason why we cannot obtain a general statement as in our main theorem by combining the lemmata in this section is that their conclusions do not match their requirements. We can thus not apply them in an iterative fashion.

To formulate some of the results, we recall the notion of \emph{future game} from~\cite{DBLP:journals/iandc/RouxP18}.
\begin{definition}[Future games]
Let $g$ be a game with winning condition $W$, and let $w \in C^*$. The future game $g^w$ is derived from $g$ by replacing $W$ with $\{ \rho\in C^\omega\,\mid\, w \cdot  \rho\in W\}$. If $\sigma_i$ is a strategy in $g$, let $\sigma_i^w(w',v) \coloneqq \sigma_i(ww',v)$ define another strategy for the same player.
\end{definition}

Our first lemma consider the case of disjunctions for Player~2, dropping the regularity assumption with regard to the language-based condition.

\begin{lemma}\label{lem:disj1}
Consider a game such that if Player $2$ can win $g$ or its future games, he can win by using finite memory. Let $L$ be a language over the alphabet $C$, and let us derive $g_{L}$ from $g$ by replacing $W$ with $W_{L}$ such that $\rho \in W_{L}$ iff $\rho \in W\,\vee\, Pref(\rho ) \cap L = \emptyset$. If Player $2$ wins $g_{L}$, he has an FM winning strategy.
\begin{proof}
Among the histories $h$ in $g$ whose colors are in $L$, let $\mathcal{H}_{L}$ contain the minimal ones for the prefix relation. Let $\mathcal{H}_2$ be the histories $h$ in $\mathcal{H}_{L}$ from where Player $2$ wins the future game of $g$ (and therefore also of $g_{L}$) after $h$. For a strategy $\sigma$, let $\mathcal{H}(\sigma)$ denote the set of histories compatible with $\sigma$, and let $[\mathcal{H}(\sigma)]$ denote the set of plays compatible with $\sigma$.

Let $\sigma_2$ be a winning strategy for Player $2$, so every play in $[\mathcal{H}(\sigma_2)]$ has a prefix in $\mathcal{H}_2$, and we can define $T$ as the subtree (a subset that is a tree) of $\mathcal{H}(\sigma_2)$ whose maximal paths are all in $\mathcal{H}_2$. By K\"onig's Lemma $T$ is finite. For each $h \in \mathcal{H}_2 \cap T$ let $\sigma_2^h$ be an FM strategy making Player $2$ win the future game of $g$ after $h$. The following FM strategy makes Player $2$ win $g_{L}$: if $h\in T \setminus \mathcal{H}_2$ go to some vertex $v \in V$ such that $hv \in T$; if $h$ has $h_1 \in \mathcal{H}_2$ as a prefix, play as prescribed by $\sigma_2^{h_1}$.
\end{proof}
\end{lemma}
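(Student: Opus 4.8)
The plan is to extract a winning strategy for Player~2 in the modified game $g_L$ and turn it into a finite-memory one by exploiting a compactness/König's lemma argument over the set of relevant histories, exactly as hinted by the structure of the statement. The key observation is that in $g_L$ Player~2 can only win a play $\rho$ by eventually forcing it into $W$ via some prefix in $L$: if no prefix of $\rho$ lies in $L$, then $\rho\in W_L$ automatically and Player~2 loses. Hence along any play consistent with a winning strategy $\sigma_2$ for Player~2, there must be a \emph{first} history $h$ whose colour sequence belongs to $L$, and from $h$ onwards Player~2 must win the future game $g^h$ (equivalently $g_L^h$, since past the prefix in $L$ the two future winning conditions coincide). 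This is why I would introduce $\mathcal{H}_L$, the prefix-minimal histories landing in $L$, and $\mathcal{H}_2\subseteq\mathcal{H}_L$, those additionally from which Player~2 wins the future game.

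First I would fix a (not necessarily finite-memory) winning strategy $\sigma_2$ for Player~2 in $g_L$ and argue that every play compatible with $\sigma_2$ has a prefix in $\mathcal{H}_2$: it has a prefix in $\mathcal{H}_L$ by the paragraph above, and the minimal such prefix lies in $\mathcal{H}_2$ because from it Player~2 still wins the future game (he is playing $\sigma_2$, which wins, and the future winning conditions agree). Next I would form the tree $T$ of all histories compatible with $\sigma_2$ that have no proper prefix in $\mathcal{H}_2$ — equivalently, the restriction of $\mathcal{H}(\sigma_2)$ down to (and including) its first hits of $\mathcal{H}_2$. This $T$ is finitely branching (finite arena) and has no infinite path (every play compatible with $\sigma_2$ hits $\mathcal{H}_2$), so by König's lemma $T$ is finite. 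In particular $\mathcal{H}_2\cap T$ is finite.

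Then, using the hypothesis, for each $h\in\mathcal{H}_2\cap T$ I would pick a finite-memory strategy $\sigma_2^h$ winning the future game $g^h$ for Player~2; such a strategy exists precisely because $h\in\mathcal{H}_2$ means Player~2 can win that future game, and by assumption he can then do so with finite memory. Finally I would assemble the finite-memory strategy for $g_L$: while the current history $h$ is in $T$ but not yet in $\mathcal{H}_2$, move along $T$ (this requires only finite memory, namely remembering the current node of the finite tree $T$); once the history reaches some $h_1\in\mathcal{H}_2\cap T$, switch permanently to $\sigma_2^{h_1}$ (finitely many finite-memory strategies, selected by a bounded pointer). The total memory is finite. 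Correctness: any play $\rho$ compatible with this strategy follows $T$ until it hits a unique $h_1\in\mathcal{H}_2$, after which the suffix is a play of $g^{h_1}$ compatible with $\sigma_2^{h_1}$, hence $h_1\cdot(\text{suffix})\in W$, i.e. $\rho\in W\subseteq W_L$; and the walk inside $T$ cannot stay forever in $T\setminus\mathcal{H}_2$ since $T$ is finite and well-founded toward $\mathcal{H}_2$.

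The main obstacle is the König's lemma step, i.e. justifying that $T$ is finite. This rests on two points that deserve care: (i) every play compatible with $\sigma_2$ genuinely has a prefix in $\mathcal{H}_L$, which uses that winning in $g_L$ forces some prefix to be in $L$ (the contrapositive argument on $W_L$); and (ii) that the minimal such prefix is in $\mathcal{H}_2$, which uses that the future winning conditions of $g$ and $g_L$ coincide once a prefix in $L$ has been seen, so $\sigma_2$ restricted after $h$ witnesses Player~2 winning $g^h$. Neither step is deep, but both hinge on the precise bookkeeping of when $Pref(\rho)\cap L=\emptyset$ flips, so I would state them explicitly rather than leaving them implicit.
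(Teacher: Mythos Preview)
Your approach is essentially identical to the paper's: define $\mathcal{H}_L$ and $\mathcal{H}_2$, take a winning strategy $\sigma_2$, cut the tree of $\sigma_2$-compatible histories at the first hits of $\mathcal{H}_2$, apply K\"onig's lemma to get finiteness of $T$, and glue the finitely many FM strategies $\sigma_2^{h}$ at the leaves. One slip to fix in your correctness paragraph: the direction is reversed --- Player~2 winning means $\rho\notin W_L$, so from $\sigma_2^{h_1}$ winning the future game of $g$ for Player~2 you obtain $\rho\notin W$, and together with $h_1\in\mathcal{H}_L$ (giving $Pref(\rho)\cap L\neq\emptyset$) you conclude $\rho\notin W_L$, not ``$\rho\in W\subseteq W_L$'' as written.
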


We now turn to conjunctions for Player $1$, assuming regularity of the language, but dropping the regular-predictability assumption for the winning condition. For every finite automaton $\mathcal{A}$ let $L_{\mathcal{A}}$ be the words accepted by $\mathcal{A}$.

\begin{lemma}\label{lem:conj1}
Let $g = (O, W)$ be a game such that whenever $O'$ is a simple restriction of a product $O \times \mathcal{A}$ for some finite automaton $\mathcal{A}$, then $(O',W)$ is determined \textit{via} strategies using finite memory (of size $m(n)$, where $n$ is the number of vertices in $O'$). Let $\mathcal{A}$ be a finite automaton over the alphabet $C$, and let us derive $g_{\mathcal{A}}$ from $g$ by replacing $W$ with $W_{\mathcal{A}}$ such that $\rho \in W_{\mathcal{A}}$ iff $\rho \in W \wedge Pref(\rho ) \cap L_{\mathcal{A}} = \emptyset$. Then $g_{\mathcal{A}}$ is determined via strategies using finite memory (of size $|Q|\cdot m(|V|\cdot |Q|)$, where $Q$ are the states of $\mathcal{A}$).

\begin{proof}
Let $\mathcal{A} = (E,Q,q^0,F,\Delta)$ and let $g' \coloneqq ((O \times \mathcal{A}),W)$. Let $S \subseteq V\times Q$ be the vertices of $g'$ from where Player $2$ cannot force the play to reach $V \times F$. Let us make a case distinction. First case, $(v_0,q^0)$ is not in $S$. So in $g'$ Player $2$ can win by playing positionally. Such a strategy yields a strategy in $g$ ($g_{\mathcal{A}}$) that only needs memory to run $\mathcal{A}$, in order to know the current state in $Q$. For this a memory of size $|Q|$ suffices.

Second case, $(v_0,q^0) \in S$. By assumption, the simple restriction $(S,W)$ is determined via strategies using finite memory (of size $m(|V|\cdot |Q|)$). If Player $1$ has a winning strategy for $(S,W)$, he can use it together with $\mathcal{A}$ to play in $g_{\mathcal{A}}$.
If Player $2$ has a winning strategy, he can do the same until the play, seen as a play in $g'$, leaves $S$; and then he can play as in the first case.
\end{proof}
\end{lemma}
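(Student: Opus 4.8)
The plan is to realise $W_{\mathcal{A}}$ as ``$W$ restricted to a safe sub-arena'', so that the hypothesis on simple restrictions of products $O\times\mathcal{A}$ applies directly. We may assume $\mathcal{A}$ deterministic and complete with $q^{0}\notin F$ (if $q^{0}\in F$ then $W_{\mathcal{A}}=\emptyset$ and Player~$2$ wins trivially with memory~$1$), so that in the product $O':=O\times\mathcal{A}$ a play visits $V\times F$ exactly when the underlying colour sequence has a prefix in $L_{\mathcal{A}}$, i.e.\ exactly when the safety conjunct of $W_{\mathcal{A}}$ fails. First I would let $S\subseteq V\times Q$ be the complement of the Player~$2$ attractor of $V\times F$ in $O'$ --- the vertices from which Player~$2$ cannot force a visit to $V\times F$. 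Standard attractor reasoning gives that every vertex of $S$ has a successor in $S$ (from a Player~$1$ vertex at least one successor lies in $S$; from a Player~$2$ vertex, all do), so $O'\mid_{S}$ is again an arena; being a simple restriction of $O\times\mathcal{A}$, the game $(O'\mid_{S},W)$ is, by hypothesis, determined by finite-memory strategies of size at most $m(|V|\cdot|Q|)$.

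Next I would split on whether the initial product vertex $(v_{0},q^{0})$ lies in $S$. If $(v_{0},q^{0})\notin S$, Player~$2$ has a positional strategy in $O'$ forcing a visit to $V\times F$; played in $g_{\mathcal{A}}$ on top of an automaton tracker for $\mathcal{A}$, it forces a prefix into $L_{\mathcal{A}}$ and so wins for Player~$2$ with memory $|Q|$. If $(v_{0},q^{0})\in S$, I claim the winner of $(O'\mid_{S},W)$ from $(v_{0},q^{0})$ also wins $g_{\mathcal{A}}$ from $v_{0}$. When Player~$1$ wins there, the point is that Player~$2$ cannot push the play out of $S$ --- from a Player~$2$ vertex of $S$ \emph{every} edge of $O'$ stays in $S$ --- so, playing his $(O'\mid_{S},W)$-winning strategy over the automaton tracker, Player~$1$ keeps the lifted play inside $S$ forever (hence no prefix meets $L_{\mathcal{A}}$) while enforcing $W$, so $\rho\in W_{\mathcal{A}}$. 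When Player~$2$ wins there, he plays his $(O'\mid_{S},W)$-winning strategy while the lifted play stays in $S$; the only way the play leaves $S$ is a Player~$1$ move into the attractor, after which Player~$2$ switches to the positional attractor strategy above. Since once the play leaves $S$ it never returns, either it stays in $S$ forever and Player~$2$'s inner strategy gives $\rho\notin W$, hence $\rho\notin W_{\mathcal{A}}$ (as $W_{\mathcal{A}}\subseteq W$); or it leaves $S$ and a prefix enters $L_{\mathcal{A}}$, hence again $\rho\notin W_{\mathcal{A}}$.

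It then remains to account for the memory. In every case the constructed strategy is the product of the automaton tracker ($|Q|$ states) with an inner finite-memory strategy of size at most $m(|V|\cdot|Q|)$: the trivial strategy when $(v_{0},q^{0})\notin S$, and the $(O'\mid_{S},W)$-winning strategy otherwise. Player~$2$'s positional attractor fallback costs nothing extra, as it is a function of the current product vertex, and the test ``has the play left $S$?'' is likewise decided by the current product vertex. This yields the claimed bound $|Q|\cdot m(|V|\cdot|Q|)$.

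I expect the main obstacle to be not any isolated hard step but getting this safe-region bookkeeping exactly right: verifying that $S$ is a genuine sub-arena, that $S$ is a trap Player~$2$ cannot escape --- which is what makes Player~$1$'s transferred strategy robust against \emph{arbitrary} Player~$2$ play on the full arena $O$, not merely on $O'\mid_{S}$ --- and, dually, that Player~$1$ can only ever leave $S$ to his own detriment, which is what legitimises Player~$2$'s ``inner strategy then attractor'' combination and keeps it finite-memory. The genuinely routine loose ends are the equivalence ``visiting $V\times F$'' $\leftrightarrow$ ``some prefix lies in $L_{\mathcal{A}}$'' and the determinism/completeness assumption on $\mathcal{A}$.
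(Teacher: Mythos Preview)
Your proposal is correct and follows essentially the same approach as the paper: form the product $O\times\mathcal{A}$, take $S$ to be the complement of Player~$2$'s attractor to $V\times F$, and split on whether $(v_{0},q^{0})\in S$, using the hypothesis on $(O'\mid_{S},W)$ in the second case and the positional attractor strategy otherwise. Your write-up is in fact more explicit than the paper's on the points it leaves tacit---that $S$ is a genuine sub-arena, that Player~$2$ cannot exit $S$, and that the determinism/completeness of $\mathcal{A}$ underpins the ``visit $V\times F$'' $\leftrightarrow$ ``prefix in $L_{\mathcal{A}}$'' correspondence.
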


The next example shows that the regularity assumption of $L_{\mathcal{A}}$ in Lemma~\ref{lem:conj1} is not dispensable. We will use the language made of the histories with positive energy levels at every prefix, which is not regular.

\begin{example}
\label{ex:energynondetermined}
We consider games where the vertices are colored by $\{a,b\}$ and where Player $1$ wins the plays with colors $wa^\omega$ for all $w \in \{a,b\}^*$ and the plays with colors $babbaa\ldots b^na^n \ldots$. Such games are FM determined, but their conjunction with a lower-bounded energy condition is not.
\begin{proof}
In order to win, Player $1$ needs to be able to force arbitrarily long color sequences of $a$'s. But if he can do that, he can force eventually constant $a$ even by a positional strategy. Likewise, for Player $2$ it suffices to play a positional strategy preventing eventually constant $a$ to win all games he can win. Hence, the base game is memoryless determined.

To see that the energy version is not FM determined, we consider a one-player game on the 2-clique $K_2$ (with added self-loops), such that both states $s$ and $t$ belong to Player~1. Let $s$, the initial vertex, have color $b$ and weight $1$, and $t$ have color $a$ and weight $-1$. Player $1$ can win by playing according to the colors $babbaabbb\ldots$, which keeps the energy non-negative, but requires infinite memory to do so. However, any FM strategy winning the underlying game has to produce a color sequence of the form $wa^\omega$, which will cause the energy to diverge to $-\infty$, thus falsifying the energy condition.
\end{proof}
\end{example}

Next, we consider simple disjunctions for Player 1 and obtain improved bounds.

\begin{lemma}\label{lem:disjunction-player1}
Let $g = (O, W)$ be a regularly-predictable game such that whenever $O'$ is a simple restriction of a product $O \times \mathcal{A}$ for some finite automaton $\mathcal{A}$, then $(O',W)$ is determined, and whenever Player $1$ can win, he can do so \textit{via} strategies using finite memory (of size $m(n)$, where $n$ is the number of vertices in $O'$).

Let $\mathcal{A}$ be a finite automaton over the alphabet $C$, and let us derive $g_{\mathcal{A}}$ from $g$ by replacing $W$ with $W_{\mathcal{A}}$ such that $\rho \in W_{\mathcal{A}}$ iff $\rho \in W \vee Pref(\rho)\cap L_{\mathcal{A}} = \emptyset$. If Player $1$ has a winning strategy in $g_{\mathcal{A}}$, he has one using finite memory (of size $|Q| \cdot |Q_g| \cdot m(|V|\cdot|Q|\cdot|Q_g|)$), where $Q$ and $Q_g$ are the states of $\mathcal{A}$ and of a finite automaton witnessing regular-predictability of $g$.
\begin{proof}
Let a finite automaton $\mathcal{A}_g$ accept exactly the histories that correspond to the future games won by Player $2$. Let $\mathcal{A} = (E,Q,q^0,F,\Delta)$, let $\mathcal{A}_g = (E,Q_g,q^0_g,F_g,\Delta_g)$, and let $g' \coloneqq g \times \mathcal{A} \times \mathcal{A}_g$. Let $S \subseteq V\times Q \times Q_g$ be the vertices of $g'$ from where Player $2$ cannot force the play to reach $V\times F \times F_g$. Let us make a case distinction.

First case, $(v_0,q^0,q^0_g)$ is not in $S$, so in $g'$ Player $2$ can reach $V\times F \times F_g$ and win, by playing in some way.

Second case, $(v_0,q^0,q^0_g) \in S$. If a finite history $h$ in $g'\mid_S$ reaches $V\times F \times Q_g$, it must be in $V\times F \times (Q_g\setminus F_g)$, by definition of $S$. So Player $1$ can win after $h$ in $g'$, and therefore in the simple restriction $g'\mid_S$ too. By assumption he can do so \textit{via} a strategy using finite memory (of size $m(|V|\cdot |Q| \cdot |Q_g|)$). He can use the same strategy to win $g_{\mathcal{A}}$, but he needs more memory to run $\mathcal{A}$ and $\mathcal{A}_g$ in parallel.
\end{proof}
\end{lemma}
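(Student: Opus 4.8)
The plan is to turn $g_{\mathcal{A}}$, via a product-then-restriction construction, into the interaction of a plain reachability game (for a region that is hopeless for Player~1) with games covered by the hypothesis. First I would record that every future game of $g$ (after a history realisable in $O$) can be obtained from $g$ by prepending a finite path that forces out the relevant colour sequence before entering $O$, i.e.\ as $(O',W)$ for $O'$ a simple restriction of a product $O\times\mathcal{A}'$ (cf.\ the discussion in Sect.~\ref{sec:req}); hence all these future games are determined, so ``Player~1 wins the future game from $v$ after $h$'' and ``Player~2 wins the future game from $v$ after $h$'' are complementary, and by regular-predictability each is a regular property of $h$. Let $\mathcal{A}_g$ be a finite automaton, built from the regular-predictability witness, whose accepting status after reading a history ending at a vertex $v$ records whether Player~2 wins the future game of $g$ from $v$, and form $g'\coloneqq g\times\mathcal{A}\times\mathcal{A}_g$. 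Since $O\times\mathcal{A}\times\mathcal{A}_g=O\times(\mathcal{A}\times\mathcal{A}_g)$, any simple restriction of $g'$ is still a simple restriction of a product of $O$ with an automaton, so the hypothesis applies to it.

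Next I would isolate the bad region. Taking $\mathcal{A}$ (w.l.o.g.) to record, in an absorbing set $F$, the occurrence of a prefix in $L_{\mathcal{A}}$, let $T\coloneqq V\times F\times F_g$: at these vertices a prefix in $L_{\mathcal{A}}$ has occurred and Player~2 wins the remaining $W$-game, so reaching $T$ lets Player~2 win $g_{\mathcal{A}}$ (the second disjunct of $W_{\mathcal{A}}$ is dead; determinacy of future games gives Player~2 a winning continuation, with arbitrary memory). Let $S$ be the complement of Player~2's attractor to $T$ in $g'$. As usual Player~2 cannot leave $S$ from his own vertices and Player~1 can always stay in $S$ from his, so $g'\mid_S$ is an arena (a simple restriction of a product of $O$ with an automaton), and $S\cap(V\times F)\subseteq V\times F\times(Q_g\setminus F_g)$ because $S\cap T=\emptyset$. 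If $(v_0,q^0,q^0_g)\notin S$, Player~2 attracts the play to $T$ and wins $g_{\mathcal{A}}$, so the conclusion holds vacuously.

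It remains to treat the case $(v_0,q^0,q^0_g)\in S$. Let $\sigma^{*}$ be a finite-memory strategy of size $m(|S|)\le m(|V|\cdot|Q|\cdot|Q_g|)$ that wins $(g'\mid_S,W)$ from every position (vertex together with history, including hypothetical ones) from which Player~1 can win; the hypothesis on simple restrictions of products provides such a strategy. Player~1 plays $\sigma^{*}$ from the start in $g_{\mathcal{A}}$, so the play stays in $S$. If it never enters $V\times F$, no prefix of $L_{\mathcal{A}}$ occurs and Player~1 wins through the second disjunct. Otherwise let $x$ be the first vertex in $V\times F$ reached; then $x\in S\cap(V\times F)\subseteq V\times F\times(Q_g\setminus F_g)$, so Player~1 wins the future $W$-game from $x$, and moreover, since $F$ is absorbing, the region ``$q\in F$ and Player~1 wins the future $W$-game'' is closed under Player~2's moves and hence contained in $S$, so Player~1 wins that future $W$-game \emph{inside} $g'\mid_S$ too; that is, $x$ with the current history is a position from which $\sigma^{*}$ wins $(g'\mid_S,W)$. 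A prefix of $L_{\mathcal{A}}$ having occurred, $W_{\mathcal{A}}$ and $W$ coincide on the remaining play, so $\sigma^{*}$ in fact wins $g_{\mathcal{A}}$ from there on. Implementing $\sigma^{*}$ on top of $\mathcal{A}$ and $\mathcal{A}_g$ (needed to know the current $g'$-vertex) yields a winning strategy for Player~1 in $g_{\mathcal{A}}$ of size $|Q|\cdot|Q_g|\cdot m(|V|\cdot|Q|\cdot|Q_g|)$. (In fact this shows $g_{\mathcal{A}}$ is determined with a finite-memory winning strategy for Player~1, which is stronger than the statement.)

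I expect the main obstacle to be making the last step rigorous: that Player~1, playing one finite-memory strategy, keeps winning once the play drifts into the bad region $V\times F$. This hinges on (i) realising the future games of $g$ within the allowed arena operations, so that $\mathcal{A}_g$ is meaningful and the relevant games are determined; (ii) reading the hypothesis as furnishing a \emph{single} finite-memory strategy on $(g'\mid_S,W)$ winning from every position it can win from (in the spirit of hSPE, covering also unrealisable histories); and (iii) checking that the region ``$F$ reached and Player~1 wins the future $W$-game'' genuinely lies inside the restricted arena $S$, so that winning inside $g'\mid_S$ coincides with winning inside $g'$, hence inside $g$. The remaining steps are routine bookkeeping of automaton states and the attractor structure.
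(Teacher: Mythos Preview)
Your proposal is correct and follows essentially the same approach as the paper: form the product $g' = g \times \mathcal{A} \times \mathcal{A}_g$, take $S$ as the complement of Player~2's attractor to $V\times F\times F_g$, and in the nontrivial case argue that reaching $V\times F$ inside $S$ lands in $V\times F\times(Q_g\setminus F_g)$, whence Player~1's FM strategy for $(g'\mid_S,W)$ lifts to one for $g_{\mathcal{A}}$. You spell out more carefully than the paper why the strategy also handles the phase before $V\times F$ is reached and why the winning region after reaching $F$ lies inside $S$; the reservation you flag under~(ii)---that one must read the hypothesis as furnishing a single FM strategy winning from every winnable position---is exactly the implicit reading in the paper's phrase ``He can use the same strategy to win $g_{\mathcal{A}}$.''
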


Since the assumptions from Lemma~\ref{lem:disjunction-player1} are stronger than those from Lemma~\ref{lem:disj1}, we also find that Player $2$ has an FM winning strategy if he can win. However, even under the assumptions from Lemma~\ref{lem:disjunction-player1}, the memory required by Player $2$ might not be uniformly bounded. We can obtain bounds for the memory required by Player $2$ (Lemma~\ref{lem:disj2}) by considering subgame-perfect strategies.

\begin{lemma}
\label{lem:constructsubgameperfect}
Let Player $1$ be able to win $g$ and all its future games that he wins by some FM strategy (using memory of size $k$). Moreover, for each FM strategy of size $k$, let there be a finite automaton (of size $l$) accepting exactly those histories won by that strategy. Then Player $1$ has an FM subgame-perfect strategy of size $(2l \cdot k)^{((k|V|)^{k|V|})}$.
\begin{proof}
There are $(k|V|)^{k|V|}$ FM strategies of size $k$. For each of them we keep track of its current state, and run the automaton deciding whether it is currently winning (using $l$ bits). We always play according to some strategy, and we change it only once it is no longer winning, and another strategy is identified as winning from the current history.
\end{proof}
\end{lemma}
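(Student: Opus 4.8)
The plan is to show that Player~1 can maintain a ``currently-winning'' strategy at all times, switching as needed, and that only finitely much bookkeeping is required to do so. First I would observe that by hypothesis Player~1 has, for each future game $g^w$ that he wins, an FM strategy of size $k$; since there are only $(k|V|)^{k|V|}$ distinct Moore machines of size $k$ over the fixed vertex set and fixed memory size, a \emph{single} finite set $\mathcal{S}$ of such strategies suffices to cover all winnable future games. For each $\tau \in \mathcal{S}$, let $\mathcal{D}_\tau$ be the promised automaton of size $l$ that accepts exactly the histories won by $\tau$ (i.e.\ histories $w$ such that $\tau^w$ wins $g^w$). The key fact, to be checked, is that from \emph{every} history $w$ from which Player~1 wins, some $\tau \in \mathcal{S}$ is currently winning, i.e.\ $w$ is accepted by $\mathcal{D}_\tau$; this is immediate because the FM winning strategy for $g^w$ guaranteed by the hypothesis is (a copy of) some element of $\mathcal{S}$.

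Next I would build the subgame-perfect strategy $\sigma_1$ as follows. The memory consists of: the current state of each $\mathcal{D}_\tau$ for $\tau \in \mathcal{S}$ (this tracks, after each history, precisely which strategies are currently winning), the current memory state of each $\tau \in \mathcal{S}$, and a pointer indicating which $\tau$ we are ``committed'' to. At a given history $w$ with current vertex $v \in V_1$, $\sigma_1$ plays according to the committed strategy $\tau$; we only change the committed strategy when, after an update, $\tau$ is no longer winning from the current history \emph{and} some other $\tau' \in \mathcal{S}$ is. When we switch to $\tau'$, we reset $\tau'$'s internal memory to its initial state (this is legitimate because $\tau'$ winning the future game $g^w$ only requires running $\tau'$ from scratch on the continuation). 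The point to verify is subgame-perfectness: fix any history $w$; if Player~1 wins from $w$, then at $w$ some committed strategy $\tau$ is winning, and the crucial claim is that once committed to a winning $\tau$ we never switch again — because a strategy that is winning from $w$ remains winning from every $\tau$-consistent extension of $w$ (winning the future game is a property stable under the moves dictated by the very strategy witnessing it). Hence from $w$ onward $\sigma_1$ coincides with $\tau^w$ run from scratch, which wins $g^w$; so $\sigma_1$ wins from $w$ whenever Player~1 can.

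For the size bound: the memory tracks, for each of the $N := (k|V|)^{k|V|}$ strategies in $\mathcal{S}$, one of at most $2l \cdot k$ combined states (the $l$ states of $\mathcal{D}_\tau$, a bit recording whether $\tau$ is committed, and the $k$ states of $\tau$'s own memory — generously bounded by $2lk$), giving at most $(2l \cdot k)^{N} = (2l\cdot k)^{((k|V|)^{k|V|})}$ memory states, as claimed. (One can also note the ``commitment pointer'' is subsumed into the per-strategy bit, so no extra factor of $N$ is needed.)

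The main obstacle I expect is the subgame-perfectness argument — specifically justifying that a committed winning strategy never needs to be abandoned, which hinges on the correct formalization of ``$\tau^w$ wins $g^w$ implies $\tau^{wv'\cdots}$ wins $g^{wv'\cdots}$ along $\tau$-consistent plays.'' This is essentially the observation that running an FM strategy from scratch and reading off its guarantee is consistent with reading off the same guarantee at any point the strategy itself could have reached; making this precise, together with the bookkeeping that each $\mathcal{D}_\tau$ faithfully tracks ``currently winning,'' is the technical heart. Everything else is routine product-automaton construction and counting.
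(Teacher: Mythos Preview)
Your proposal follows the same approach as the paper: enumerate the finitely many size-$k$ Moore machines, run each one together with its ``is it currently winning'' automaton in parallel, commit to one that is winning, and switch only when the committed one stops being winning while another is. The paper's proof is a terse three-sentence version of precisely this construction, and your size accounting matches theirs.

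There is, however, one internal inconsistency worth fixing. You define $\mathcal{D}_\tau$ as accepting those $w$ for which $\tau^w$ wins $g^w$; by definition $\tau^w(w',v)=\tau(ww',v)$, so this means the Moore machine of $\tau$ is to be used \emph{after having read $w$}, i.e., from state $\widehat{\alpha}_u(m_0,w)$. Your later instruction to ``reset $\tau'$'s internal memory to its initial state'' upon switching therefore does not match: acceptance of $w$ by $\mathcal{D}_{\tau'}$ guarantees that $\tau'^{w}$ wins, not that $\tau'$ started afresh from $m_0$ wins $g^w$. The parenthetical justification (``only requires running $\tau'$ from scratch on the continuation'') is exactly the step that fails. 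The remedy is simply to drop the reset and use the current state of $\tau'$ that you are already tracking---this is what the paper does (``for each of them we keep track of its current state''). With that correction, your subgame-perfectness argument (once committed to a winning $\tau$, it remains winning along every continuation consistent with $\tau$, so no further switch occurs) goes through as written.
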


\begin{lemma}\label{lem:disj2}
Let $g$ be a regularly-predictable game such that Player $2$ has a subgame-perfect strategy of size $m$. Let $\mathcal{A}$ be a finite automaton over the alphabet $C$. Let us derive $g_{\mathcal{A}}$ from $g$ by replacing $W$ with $W_{\mathcal{A}}$ such that $\rho \in W_{\mathcal{A}}$ iff $\rho \in W\,\vee\,Pref(\rho ) \cap L_\mathcal{A} = \emptyset$. If Player $2$ wins $g_{\mathcal{A}}$, he has a winning strategy of size $l|\mathcal{A}|+m$, where some automaton of size $l$ witnesses regular-predictability of $g$.
\begin{proof}
We can combine the automaton deciding who wins a history and $\mathcal{A}$ into one automaton of size $l|\mathcal{A}|$that accepts the intersection of these two languages. By considering the expansion by this automaton, we see that simulating it suffices for Player $2$ to force an accepted history, if he can do so. Now reaching such a history and then switching to the subgame-perfect FM strategy of size $m$ wins $g_\mathcal{A}$ if this is possible at all.
\end{proof}
\end{lemma}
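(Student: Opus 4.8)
The plan is to mimic the strategy-combination pattern already used in Lemmas~\ref{lem:disj1} and~\ref{lem:disjunction-player1}: Player~$2$ first plays to \emph{force} the play into the region where the original winning condition $W$ is the only thing that matters, and then switches to a subgame-perfect strategy that handles $W$ itself. The key simplification here is that, because we are given a subgame-perfect strategy for $g$ and $g$ is regularly-predictable, the ``reachability phase'' and the ``$W$-phase'' can be tracked entirely by finite automata, so no product with the arena (and hence no dependence on $|V|$) is needed in the memory count.

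First I would make precise what ``forcing an accepted history'' means. Since $g$ is regularly-predictable, fix an automaton of size $l$ that, reading a color history, accepts exactly those histories after which Player~$2$ wins the corresponding future game of $g$; equivalently (by determinacy, which holds since a subgame-perfect strategy exists) it rejects exactly the histories from which Player~$1$ wins the future game. Recall that $\rho\in W_{\mathcal A}$ iff $\rho\in W$ or $Pref(\rho)\cap L_{\mathcal A}=\emptyset$, so from Player~$2$'s point of view the play is lost for him precisely when $\rho\notin W$ \emph{and} some prefix of $\rho$ lies in $L_{\mathcal A}$. Dually, for Player~$2$ to win $g_{\mathcal A}$ it suffices to drive the play to a history which (i) already has a prefix in $L_{\mathcal A}$, so the ``escape clause'' is permanently disabled, and (ii) is a history from which he wins the future game of $g$. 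Form the product automaton of size $l\cdot|\mathcal A|$ accepting exactly the histories satisfying both (i) and (ii) --- this is just the intersection of the regular-predictability language with the (regular) language ``a prefix lies in $L_{\mathcal A}$''.

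Next I would argue that if Player~$2$ wins $g_{\mathcal A}$ at all, he can force such a history using only the memory needed to simulate this size-$l|\mathcal A|$ automaton. The point is a standard attractor/reachability argument: consider the game $g_{\mathcal A}$ expanded by this automaton; the set of expanded vertices from which Player~$2$ can force reaching a state where the automaton accepts is computed by an attractor, and on it a positional (in the expanded game) strategy suffices. If the initial vertex is \emph{not} in this attractor, then along every play consistent with any Player~$2$ strategy either the accepting condition is never reached --- in which case the only way Player~$2$ could still be winning is if the play stays forever in histories from which Player~$1$ wins the future game, a contradiction with $g_{\mathcal A}$ being won by Player~$2$ --- so in fact winning forces the initial vertex into the attractor. (This is the same move as the ``first case vs.\ second case'' split in Lemmas~\ref{lem:conj1} and~\ref{lem:disjunction-player1}, and is where one must be slightly careful.) Once an accepting history $h$ is reached, condition (i) guarantees the escape clause can never be re-enabled, so from $h$ onwards $g_{\mathcal A}$ and $g$ have the same winning condition; by condition (ii), $h$ is a history from which Player~$2$ wins the future game of $g$, and the given subgame-perfect strategy of size $m$ --- which by definition is winning from \emph{every} history from which Player~$2$ can win --- wins from $h$. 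Splicing the size-$l|\mathcal A|$ forcing strategy before the size-$m$ subgame-perfect strategy yields a winning strategy of size $l|\mathcal A|+m$, as claimed.

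The main obstacle, and the only place needing real care, is the middle step: justifying that the reachability-analysis on the expanded game is sound, i.e.\ that Player~$2$'s inability to force an accepting history would contradict his ability to win $g_{\mathcal A}$. This requires knowing that along a play that never hits the accepting set, at every point either the escape clause $Pref(\rho)\cap L_{\mathcal A}=\emptyset$ eventually fails with the future game still won by Player~$1$, or it never fails --- and in the latter case $\rho\notin W_{\mathcal A}$ is not automatic, so one must instead use that Player~$1$, by regular-predictability and determinacy, has a strategy from such a history to keep $\rho\notin W$, which combined with the escape clause never firing would mean $\rho\notin W_{\mathcal A}$, contradicting that Player~$2$ was playing a winning strategy. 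Handling the interleaving between ``the escape clause fires'' and ``the future game is won by whom'' cleanly is the crux; everything else is bookkeeping about automaton products and the definition of subgame-perfection.
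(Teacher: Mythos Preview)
Your approach is essentially identical to the paper's: form the product automaton of size $l|\mathcal{A}|$ recognising histories that both have a prefix in $L_\mathcal{A}$ and are winning for Player~$2$ in the future game of $g$; use it for a positional-in-the-expanded-game reachability phase; then switch to the given subgame-perfect strategy of size $m$. The paper's proof is extremely terse and does not spell out the correctness argument you attempt in your final paragraph.

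That final paragraph, however, has its cases and signs muddled. In your case ``the escape clause never fails'' (no prefix in $L_\mathcal{A}$), the second disjunct of $W_\mathcal{A}$ holds, so $\rho\in W_\mathcal{A}$ \emph{automatically} and Player~$1$ wins with no further work; there is nothing to argue there. It is the \emph{other} case---a prefix lands in $L_\mathcal{A}$ while the future game of $g$ is still won by Player~$1$---that requires the extra step: from that history Player~$1$ switches to a winning strategy for $g$, ensuring $\rho\in W$ (not ``keep $\rho\notin W$'' as you wrote), hence $\rho\in W_\mathcal{A}$. Your sentence ``which combined with the escape clause never firing would mean $\rho\notin W_\mathcal{A}$'' is also backwards: if the escape clause never fires then $\rho\in W_\mathcal{A}$ regardless of $W$. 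The skeleton of your contrapositive argument is right, but the inequalities are flipped.
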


We briefly return to the two versions of fully-bounded energy conditions from~\cite{bouyer} discussed in Sect.~\ref{sec:sota}: battery-like and spill-over-like variants. Let $E_{\max}$ be the energy upper bound in both cases. Since both conditions can be simulated by a finite automaton with $\mathcal{O}(E_{\max})$ states, Lemma~\ref{lem:conj1} implies the following corollary. Contrast this to the requirement of memory size $\vert V\vert \cdot d \cdot W$ (where $d$ is the number of priorities and $W$ the largest energy weight) for unbounded-energy parity games from~\cite{DBLP:journals/tcs/ChatterjeeD12}.

\begin{corollary}
\label{corr:batterylike}
Battery-like energy parity games and spill-over-like energy parity games are determined \textit{via} strategies using $\mathcal{O}(E_{\max})$ memory states, where $E_{\max}$ is the energy upper bound.
\end{corollary}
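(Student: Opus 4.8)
The plan is to derive the statement directly from Lemma~\ref{lem:conj1}, instantiated with $g = (O, W)$ where $W$ is the parity condition and $\mathcal{A}$ a finite automaton that records the current energy level. Here colors carry both an integer weight (used for energy) and a priority (used for parity), so both ingredients of an energy parity game are expressible in the common alphabet $C$.

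First I would spell out the tracking automaton. For the spill-over-like variant with upper bound $E_{\max}$, let $\mathcal{A}$ have states $\{0,1,\dots,E_{\max}\}\cup\{\bot\}$ with $\bot$ absorbing: from a state $e$, on reading a color of weight $\delta$, move to $e+\delta$ if $0 \le e+\delta \le E_{\max}$, and to $\bot$ otherwise; the initial state is the initial credit and $\bot$ is the unique accepting state. For the battery-like variant the only change is that excess energy is capped rather than rejected: from $e$ on weight $\delta$ go to $\min(e+\delta,E_{\max})$ when $e+\delta \ge 0$, and to $\bot$ otherwise. In both cases $L_{\mathcal{A}}$ is exactly the set of finite color words that have already violated the fully-bounded energy constraint, so $Pref(\rho) \cap L_{\mathcal{A}} = \emptyset$ holds iff $\rho$ respects the energy condition; and $\mathcal{A}$ has $|Q| = E_{\max}+2 = \mathcal{O}(E_{\max})$ states.

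Next I would check the hypothesis of Lemma~\ref{lem:conj1} for $(O,W)$ with $W$ the parity condition: whenever $O'$ is a simple restriction of a product $O \times \mathcal{A}'$ for some finite automaton $\mathcal{A}'$ over $C$, the game $(O',W)$ is again a parity game — the product inherits the coloring of $O$, and a simple restriction keeps an outgoing edge at every vertex — hence it is memoryless determined~\cite{DBLP:journals/tcs/Zielonka98}. So the memory function in Lemma~\ref{lem:conj1} may be taken to be $m(n) = 1$ for all $n$. Applying the lemma with the automaton $\mathcal{A}$ above then gives that $g_{\mathcal{A}} = (O, W_{\mathcal{A}})$, where $\rho \in W_{\mathcal{A}}$ iff $\rho$ satisfies both parity and the energy condition, is determined via strategies of memory size $|Q|\cdot m(|V|\cdot|Q|) = \mathcal{O}(E_{\max})\cdot 1 = \mathcal{O}(E_{\max})$. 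Since $g_{\mathcal{A}}$ is precisely the battery-like (resp.~spill-over-like) energy parity game, this is the claim.

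I do not expect a real obstacle. The only two points that deserve a line of justification are that the fully-bounded energy constraint is genuinely a regular, safety-type condition — which is exactly why the reachable levels form the bounded set $\{0,\dots,E_{\max}\}$ and can be tracked by $\mathcal{A}$ — and that the parity condition survives the products-and-restrictions appearing in the premise of Lemma~\ref{lem:conj1}, which is immediate from its prefix-independence and from colors being inherited by products. Everything else is a direct substitution into Lemma~\ref{lem:conj1}.
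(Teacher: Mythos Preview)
Your proposal is correct and follows exactly the approach the paper intends: the corollary is stated immediately after the observation that both fully-bounded energy conditions are recognized by an automaton with $\mathcal{O}(E_{\max})$ states, and is explicitly attributed to Lemma~\ref{lem:conj1}. You have simply made the automaton construction and the instantiation $m(n)=1$ (via memoryless determinacy of parity on arbitrary arenas) explicit, which the paper leaves implicit.
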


\bibliography{references}

\end{document}